\def\maketag@@@#1{\hbox{\m@th\normalfont\normalsize#1}}
\newcommand{\subparagraph}{}
\titlespacing*{\section}{0pt}{1\baselineskip}{0.9\baselineskip}
\pgfplotsset{compat=newest}
\pgfplotsset{compat=newest} 
\pgfplotsset{plot coordinates/math parser=false}
\newlength\fwidth
\def\mindex#1{\index{#1}}
\def\sq{\hbox{\rlap{$\sqcap$}$\sqcup$}}
\def\qed{\ifmmode\sq\else{\unskip\nobreak\hfil
\penalty50\hskip1em\null\nobreak\hfil\sq
\parfillskip=0pt\finalhyphendemerits=0\endgraf}\fi\medskip}
\long\def\defbox#1{\framebox[.9\hsize][c]{\parbox{.85\hsize}{%
\parindent=0pt
\baselineskip=12pt plus .1pt      
\parskip=6pt plus 1.5pt minus 1pt 
 #1}}}
\long\def\beginbox#1\endbox{\subsection*{}%
\hbox{\hspace{.05\hsize}\defbox{\medskip#1\bigskip}}%
\subsection*{}}
\def\endbox{}
\newsavebox{\junk}
\savebox{\junk}[1.6mm]{\hbox{$|\!|\!|$}}
\def\bC{{\mathbb C}}
\def\bE{{\mathbb E}}
\def\bG{{\mathbb G}}
\def\bR{{\mathbb R}}
\def\bfB{{\bf B}}
\def\bfh{{\bf h}}
\def\bfn{{\bf n}}
\def\bfy{{\bf y}}
\def\sfH{{\sf H}}
\def\bfmath#1{{\mathchoice{\mbox{\boldmath$#1$}}%
{\mbox{\boldmath$#1$}}%
{\mbox{\boldmath$\scriptstyle#1$}}%
{\mbox{\boldmath$\scriptscriptstyle#1$}}}}
\def\bfmY{\bfmath{Y}}
\def\bfmhhaY{\bfmath{\hhaY}} 
\def\bfmhhaY{\hbox to 0pt{$\widehat{\bfmY}$\hss}\widehat{\phantom{\raise 1.25pt\hbox{$\bfmY$}}}}
\def\rmd{{\rm d}}
\def\til={{\widetilde =}}
\def\clC{{\cal C}}
\def\clN{{\cal N}}
 \def\FRAC#1#2#3{\genfrac{}{}{}{#1}{#2}{#3}}
\def\ddtp{{\mathchoice{\FRAC{1}{d^{\hbox to 2pt{\rm\tiny +\hss}}}{dt}}%
{\FRAC{1}{d^{\hbox to 2pt{\rm\tiny +\hss}}}{dt}}%
{\FRAC{3}{d^{\hbox to 2pt{\rm\tiny +\hss}}}{dt}}%
{\FRAC{3}{d^{\hbox to 2pt{\rm\tiny +\hss}}}{dt}}}}
\def\average#1,#2,{{1\over #2} \sum_{#1}^{#2}}
\def\eye(#1){{\bf(#1)}\quad}
\newtheorem{theorem}{{\bf Theorem}}
\newtheorem{proposition}{{\bf Proposition}}
\newtheorem{corollary}[proposition]{{\bf Corollary}}
\newtheorem{lemma}{{\bf Lemma}}
\def\eq#1/{(\ref{e:#1})}
\newcommand{\beqn}[1]{\notes{#1}%
\begin{eqnarray} \elabel{#1}}
\newcommand{\eeqn}{\end{eqnarray} }
\newcommand{\beq}[1]{\notes{#1}%
\begin{equation}\elabel{#1}}
\newcommand{\eeq}{\end{equation}}
\def\bdes{\begin{description}}
\def\edes{\end{description}}
\newcounter{rmnum}
\newcounter{anum}
\def\ass(#1:#2){(#1\ref{#1:#2})}
\def\ritem#1{
\item[{\sf \ass(\current_model:#1)}]
}
\newenvironment{recall-ass}[1]{%
\begin{description}
\def\current_model{#1}}{
\end{description}
}
\long\def\comment#1{}
\newfont{\bbb}{msbm10 scaled 700}
\newfont{\bb}{msbm10 scaled 1100}
\newcommand{\CC}{\mbox{\bb C}}
\newcommand{\EE}{\mbox{\bb E}}
\newcommand{\av}{{\bf a}}
\newcommand{\cv}{{\bf c}}
\newcommand{\dv}{{\bf d}}
\newcommand{\fv}{{\bf f}}
\newcommand{\gv}{{\bf g}}
\newcommand{\hv}{{\bf h}}
\newcommand{\uv}{{\bf u}}
\newcommand{\xv}{{\bf x}}
\newcommand{\yv}{{\bf y}}
\newcommand{\zv}{{\bf z}}
\newcommand{\zerov}{{\bf 0}}
\newcommand{\Am}{{\bf A}}
\newcommand{\Bm}{{\bf B}}
\newcommand{\Cm}{{\bf C}}
\newcommand{\Fm}{{\bf F}}
\newcommand{\Gm}{{\bf G}}
\newcommand{\Hm}{{\bf H}}
\newcommand{\Id}{{\bf I}}
\newcommand{\Jm}{{\bf J}}
\newcommand{\Lm}{{\bf L}}
\newcommand{\Pm}{{\bf P}}
\newcommand{\Qm}{{\bf Q}}
\newcommand{\Rm}{{\bf R}}
\newcommand{\Tm}{{\bf T}}
\newcommand{\Um}{{\bf U}}
\newcommand{\Wm}{{\bf W}}
\newcommand{\Vm}{{\bf V}}
\newcommand{\Xm}{{\bf X}}
\newcommand{\Ac}{{\cal A}}
\newcommand{\Bc}{{\cal B}}
\newcommand{\Cc}{{\cal C}}
\newcommand{\Ec}{{\cal E}}
\newcommand{\Gc}{{\cal G}}
\newcommand{\Kc}{{\cal K}}
\newcommand{\Lc}{{\cal L}}
\newcommand{\Mc}{{\cal M}}
\newcommand{\Nc}{{\cal N}}
\newcommand{\Pc}{{\cal P}}
\newcommand{\Sc}{{\cal S}}
\newcommand{\Tc}{{\cal T}}
\newcommand{\Vc}{{\cal V}}
\newcommand{\lambdav}{\hbox{\boldmath$\lambda$}}
\newcommand{\Lambdam}{\boldsymbol{\Lambda}}
\newcommand{\Psim}{\boldsymbol{\Psi}}
\newcommand{\trace}{{\hbox{tr}}}
\renewcommand{\arg}{{\hbox{arg}}}
\newcommand{\SNR}{{\sf SNR}}
\newcommand{\Cdlk}{\Cm_k}
\newcommand{\Cdlcirck}{\mathring{\Cm}_k}
\newcommand{\lambdaul}{\lambda_{{\rm ul}}}
\newcommand{\lambdadl}{\lambda_{{\rm dl}}}
\newcommand{\ful}{f_{\rm ul}}
\newcommand{\fdl}{f_{\rm dl}}
\newcommand{\Cul}{\Cm_{\rm ul}}
\newcommand{\Cdl}{\Cm_{\rm dl}}
\newcommand{\Tdl}{{\sf T_{dl}}}
\newcommand{\Pdl}{{\sf P_{dl}}}
\def\herm{{\sfH}}
\def\cg{{\clC\clN}}
\newcommand{\heff}{{\check{\bfh}}_{\rm eff}}
\newcommand{\heffk}{{\check{\bfh}}_{\rm eff}^{(k)}}
\newcommand{\Pthresh}{{\sf P_0}}
\newcommand{\Nvar}{{\sf N_0}}
\newcommand{\Hcoeffs}{\check{\Hm}}
\newtheorem{problem}{{\bf Problem}}
\begin{document}

\vspace{-20mm}

\title{FDD Massive MIMO via UL/DL Channel Covariance Extrapolation and  Active Channel Sparsification}
\author{Mahdi Barzegar Khalilsarai$^\star$, Saeid Haghighatshoar$^\star$,  Xinping Yi$^\dagger$,  and Giuseppe Caire$^\star$
\thanks{$\star$ Communications and Information Theory Group, Technische Universit\"{a}t Berlin (\{m.barzegarkhalilsarai, saeid.haghighatshoar, caire\}@tu-berlin.de).}
\thanks{$\dagger$ Department of Electrical Engineering and Electronics, University of Liverpool  (xinping.yi@liverpool.ac.uk).}
}

\maketitle
\vspace{-0.5cm}

\begin{abstract}
\linespread{1.3}\small
We propose a novel method for massive Multiple-Input Multiple-Output (massive MIMO) in 
Frequency Division Duplexing (FDD) systems. Due to the large frequency separation between 
Uplink (UL) and Downlink (DL),  in FDD systems channel reciprocity does not hold. Hence, 
in order to provide DL channel state information to the Base Station (BS),
closed-loop DL channel probing and {\em Channel State Information} (CSI) feedback is needed. In massive MIMO this incurs typically  
a large training overhead. For example, in a typical configuration with
$M \simeq 200$ BS antennas and fading coherence block of $T \simeq 200$ symbols, 
the resulting rate penalty factor due to the DL training overhead, given by  $\max\{0, 1 - M/T\}$, 
is close to 0. To reduce this overhead, we build upon the well-known fact 
that the {\em Angular Scattering Function} (ASF)  of the user channels  is {\em invariant} over frequency intervals
whose size is small with respect to the carrier frequency (as in current FDD cellular standards). 
This allows to estimate the users' DL channel covariance matrix from UL pilots without additional overhead.  
Based on this covariance information, we propose a novel {\em sparsifying precoder} in order to
maximize the rank of the effective sparsified channel matrix subject to the condition that each effective user channel has 
sparsity not larger than some desired DL pilot dimension  $\Tdl$, resulting in the DL training overhead factor $\max\{0, 1 - \Tdl / T\}$ and
CSI feedback cost of  $\Tdl$ pilot measurements.
The optimization of the sparsifying precoder is formulated as a {\em Mixed Integer Linear Program}, 
that can be efficiently solved. Extensive simulation results demonstrate the superiority of the proposed approach 
with respect to concurrent state-of-the-art schemes based on compressed sensing or UL/DL dictionary learning.
\end{abstract}

\vspace{-0.5cm}

\begin{IEEEkeywords}
FDD massive MIMO,  downlink covariance estimation, active channel sparsification.  
\end{IEEEkeywords}


\section{Introduction}  \label{sec:intro}

Multiuser \textit{Multiple-Input Multiple-Output} (MIMO) consists of exploiting multiple antennas at the Base Station (BS) side, in order to
multiplex over the spatial domain multiple data streams to multiple users sharing the same time-frequency transmission resource (channel bandwidth and time slots). 
For a block-fading channel with spatially independent fading and coherence block of $T$ symbols,\footnote{This is the number of signal dimensions over which
the fading channel coefficients can be considered constant over time and frequency \cite{tse2005fundamentals}.}
the high-SNR sum-capacity behaves as $C(\SNR) = M^* (1  - M^*/T) \log \SNR + O(1)$, 
where $M^* = \min\{M,K,T/2\}$, $M$ denotes the number of BS antennas, 
and $K$ denotes the number of single-antenna users \cite{zheng2002communication,marzetta2006much,adhikary2013joint}. 
When $M$ and the number of users are potentially very large, 
the system {\em pre-log factor}\footnote{ With this term we indicate the the number of spatial-domain data streams supported by the system,  
such that each stream has spectral efficiency that behaves as an interference-free Gaussian channel, i.e., 
$\log \SNR + O(1)$. In practice, although the system may be interference limited (e.g., due to inter-cell interference in multicell cellular systems), 
a well-design system would exhibit a regime of practically relevant SNR for which its sum-rate behaves as an affine function of $\log \SNR$ 
\cite{lozano2013fundamental}.}
is maximized by serving $K = T/2$ data streams (users). While any number $M \geq K$ of BS antennas yields the same (optimal) 
{\em pre-log factor}, a key observation made in \cite{Marzetta-TWC10} is that,  when training a very large number of 
antennas comes at no additional overhead cost, it is indeed convenient to use $M \gg K$ antennas at the BS. 
In this way, at the cost of some additional {\em hardware complexity}, 
very significant benefits at the system level can be achieved. These include:  
i) energy efficiency (due to the large beamforming gain); ii) 
inter-cell interference reduction; iii) a dramatic simplification of user scheduling and rate adaptation,  
due to the inherent large-dimensional channel hardening \cite{larsson2014massive}. 
Systems for which the number of BS antennas $M$ is much larger than the number of DL data streams $K$ 
are generally referred to as {\em massive MIMO} (see \cite{Marzetta-TWC10,larsson2014massive,marzetta2016fundamentals} and references therein). 
Massive MIMO  has been the object of intense research investigation and development
and is expected to be a cornerstone of the forthcoming 5th generation of wireless/cellular 
systems \cite{boccardi2014five}.   

In order to achieve the benefits of massive MIMO,  the BS must learn 
the downlink channel coefficients for $K$ users and $M \gg K$ BS antennas. 
For Time Division Duplexing (TDD) systems, due to the inherent \textit{Uplink}-\textit{Downlink} (UL-DL) channel reciprocity \cite{marzetta2006much}, this can be 
obtained from $K$ mutually orthogonal  UL pilots transmitted by the users. 
Unfortunately, the UL-DL channel reciprocity does not hold for \textit{Frequency Division Duplexing} (FDD) systems, 
since the  UL and DL channels are separated in frequency by much more than the channel coherence bandwidth \cite{tse2005fundamentals}. 
Hence, unlike TDD systems, in FDD the BS must actively probe the DL channel by sending a common DL pilot signal, and request
the users to feed their channel state back. 

In order to obtain a ``fresh'' channel estimate for each coherence block,  
$\Tdl$ out of $T$ symbols per coherence block  must be dedicated to the DL common pilot. 
Assuming (for simplicity of exposition) a delay-free channel state feedback, 
the resulting DL {\em pre-log factor} is given by $K \times \max\{0, 1 - \Tdl/T\}$, where 
$K$ is the number of served users, and $\max\{0, 1 - \Tdl/T\}$ is the penalty factor incurred by  DL channel training. 
Conventional DL training consists of sending orthogonal pilot signals from each BS antenna. Thus, in order to train $M$ antennas, 
the minimum required training dimension is $\Tdl = M$. 
Hence, with such scheme, the number of BS antennas $M$ cannot be made arbitrarily large.  For example, consider a typical case taken from the 
LTE system \cite{sesia2011lte}, where groups of users are scheduled over 
resource blocks spanning 14 OFDM symbols $\times$ 12 subcarriers, for a total dimension of $T =  168$ symbols in the time-frequency plane. 
Consider a typical massive MIMO configuration serving $K \sim 20$ users with $M \geq 200$ antennas (e.g., see \cite{malkowsky2017world}). 
In this case, the entire resource block dimension would be consumed by the DL pilot, leaving no room for data communication.  
Furthermore, feeding back the $M$-dimensional measurements (or estimated/quantized channel vectors) 
represents also a significant feedback overhead for the UL \cite{caire2010multiuser,kobayashi2011training,yin2013coordinated,love2003grassmannian,jindal2006mimo}. 

While the argument above is kept informal on purpose, it can be made information-theoretically rigorous. 
The central issue is that, if one insists to estimate the $K \times M$ channel matrix in an ``agnostic'' way, i.e., without exploiting 
the channel fine structure, a hard dimensionality bottleneck kicks-in and fundamentally limits the number of data streams that can be supported in the DL by 
FDD systems. It follows that gathering ``massive MIMO gains'' in FDD systems is a challenging problem. 
On the other hand, current wireless networks are mostly based on FDD. Such systems are easier to operate and
more effective than TDD systems in situations with symmetric traffic and delay-sensitive applications 
\cite{jiang2015achievable,chan2006evolution,rao2014distributed}. In addition, converting current FDD systems to TDD
would represent a non-trivial cost for wireless operators.  With these motivations in mind, a significant effort has been 
recently devoted in order to reduce the common DL training dimension and feedback overhead 
in order to materialize significant massive MIMO gains also for FDD systems. 

\subsection{Related works: compressed DL pilots} \label{cs-based-works}
Several works have proposed to reduce both the DL training and UL feedback overheads by exploiting the sparse structure of the 
massive MIMO channel. In particular, these works assume that propagation between the BS array and the user antenna occurs through a limited number
of scattering clusters, with limited support\footnote{Throughout the paper the term ``support" indicates a set of intervals/indices over which a function/vector has non-zero value.} in the {\em Angle-of-Arrival/Angle-of-Departure} (AoA-AoD) domain.\footnote{From the BS perspective, AoD for the DL and AoA for the UL indicate the same domain. Hence, 
we shall simply refer to this as the ``angle domain'', while the meaning of departure (DL) or arrival (UL) is clear from the context.}  
Hence, by decomposing the angle domain into discrete ``virtual beam'' directions, 
the $M$-dimensional user channel vectors admit a sparse representation in the beam-space 
domain (e.g., see \cite{sayeed2002deconstructing,bajwa2010compressed}). 
Building on this idea, a large number of works (e.g., see \cite{kuo2012compressive,sim2016compressed,rao2014distributed,gao2015spatially,ding2016dictionary,fang2017low,dai2017fdd,xie2017unified}) proposed to use  ``compressed pilots'', i.e., a  reduced DL pilot dimension $\Tdl < M$, in order to 
estimate the channel vectors using {\em Compressed Sensing} (CS) techniques 
\cite{donoho2006compressed,candes2008introduction}. In \cite{bajwa2010compressed} sparse representation of channel multipath components  in angle, delay and Doppler domains was exploited to propose CS methods for channel estimation using far fewer measurements than required by conventional least-squares (LS) methods. For example, in \cite{gao2015spatially}, the authors noticed that the angles of the multipath channel components are common among all the subcarriers in the OFDM signaling and exploited the common sparsity to further reduce the number of required channel measurements. This gives rise to a so-called Multiple Measurement Vector (MMV) setting, arising when multiple snapshot of a random vector with common sparse support can be acquired and jointly processed (e.g., see \cite{chen2006theoretical,eldar2010average}). This was adapted to FDD in massive MIMO regime were introduced next, where the frequent idea is to probe the channel using compressed pilots in the downlink, receiving the measurements at the BS via feedback and performing channel estimation there. A recent work based on this approach was presented in \cite{rao2014distributed}, starting with the observation that, as shown in many
	experimental studies \cite{kyritsi2003correlation,kaltenberger2008correlation,hoydis2012channel,gao2011linear},
	the propagation between the BS antenna array and the users occurs along given scattering clusters, 
	that may be common to multiple users, since they all belong to the same scattering environment.
	In turns, this yields that the channel sparse representations (in the angle/beam-space domain) share a common part of their support. 
	Hence, \cite{rao2014distributed} considers a scheme where the users feed back their noisy DL pilot measurements to the BS 
	and the latter runs a {\em joint recovery} algorithm, coined as \textit{Joint Orthogonal Matching Pursuit} (J-OMP), able to take advantage of the 
	common sparsity.  It follows that in the presence of common sparsity, J-OMP  improves upon the basic CS schemes that estimate 
	each user channel separately.

More recent CS-based methods, in addition, make use of the \textit{angular reciprocity} between the UL and the DL channels in FDD systems 
to improve channel estimation. Namely, this refers to the fact that the directions (angles) of propagation for the UL and DL channel are invariant
over the frequency range spanning the UL and DL bands, which is generally very small with respect to the carrier frequency (e.g., UL/DL separation of the order of
100MHz, for carrier frequencies ranging between 2 and 6 GHz)  \cite{hugl2002spatial,ali2017millimeter,xie2017channel}. 
In \cite{xie2017unified} the sparse set of AoAs is estimated from a preamble transmission phase in the UL, and this information
is used for user grouping and channel estimation in the DL according to the well-known JSDM paradigm \cite{adhikary2013joint,nam2014joint}. 
In \cite{ding2016dictionary} the authors proposed a dictionary learning-based approach. First,  in a preliminary learning 
phase a pair of UL-DL dictionaries able to sparsely representing the channel are obtained. Then, these dictionaries are used for a joint sparse estimation 
of instantaneous UL-DL channels. An issue with this method is that the dictionary learning phase requires off-line training and must be re-run if the propagation environment around
the BS changes (e.g., due to large moving objects such as truck and buses, or new building). 
In addition, the computation involved in the instantaneous channel estimation is prohibitively demanding for real-time operations with a large 
number of antennas ($M>100$).  
In \cite{dai2017fdd} the authors propose estimating the DL channel using a sparse Bayesian learning framework aiming at joint maximum a posteriori (MAP) estimation 
of the off-grid AoAs and multipath component strength by observing instantaneous UL channel measurements. 
This method has the drawback that it fundamentally assumes discrete and separable (in the AoA domain) multipath components and assumes that the order of the channel
(number of AoA components) is a priori known. Hence, the method simply cannot be applied in the case of continuous (diffuse) scattering, 
where the scattering power is distributed over a continuous interval of in the angle domain.
	\subsection{Contribution}\label{sec:contribution}
	The focus of this paper is an efficient scheme for massive MIMO in FDD systems. 
	Our goal is to be able to serve as many users as possible even with very small number of DL pilots, compared to the inherent channel dimension. Similar to previous works \cite{rao2014distributed,ding2016dictionary,dai2017fdd}, 
	we consider a scheme where each user sends back its $\Tdl$ noisy pilot observations per slot, using unquantized analog feedback  (see \cite{caire2010multiuser,kobayashi2011training}). Hence, achieving a small $\Tdl$ yields  both a reduction of DL training and UL feedback overhead. We summarize the major contributions of our work as follows:\\
	$\bullet$ \textbf{DL covariance estimation:} the first problem addressed in this paper is how to estimate DL channel covariance from UL pilot symbols, which are sent anyway in order to enable a coherent multiuser MIMO reception in the UL (see Section \ref{sec:DL_cov_est}). The  covariance matrix can be expressed as an integral transform of the channel \textit{Angular Scattering Function} (ASF), which encodes the signal power distribution over the angle domain. 
	Because of the already mentioned UL/DL angle reciprocity, the channel ASF is invariant with respect to frequency
over frequency intervals that are small with respect to the carrier frequency.
Stemming from the ASF reciprocity, the idea of UL to DL covariance estimation/transformation is studied in several previous works, including \cite{chalise2004robust,han2010potential,aste1998downlink,vasisht2016eliminating,miretti2018fdd}. 
Our approach consists of estimating the channel ASF of each user from UL pilots, and using it to ``extrapolate'' the covariance matrix from UL to DL. 
As shown in our recent work \cite{haghighatshoar2018multi},  this extrapolation problem is non-trivial and must be posed in a robust min-max sense. In \cite{haghighatshoar2018multi} we also show that robust covariance reconstruction can be obtained as long as one ensures that the estimated channel ASF is a real, positive function and that its generated UL antenna correlation is consistent with the true UL antenna correlation. 
Unlike most of the works in the literature, including the ones mentioned above, our covariance extrapolation technique does not rely on any regularity assumption on the ASF. 
That is to say, we do not assume the ASF to be discrete or sparse, and the estimation method works for a generic ASF. In contrast, it exploits the Toeplitz (resp., block-Toeplitz) structure of  the channel covariance matrix resulting from Uniform Linear Arrays (ULA) (resp., Uniform Planar Arrays (UPA)). \\
	$\bullet$ \textbf{Active channel sparsification:} the second problem addressed in this paper is how to effectively and {\em artificially} reduce each user channel 
	dimension, such that a single common DL pilot of assigned dimension $\Tdl$ is sufficient to estimate a large number of user channels (see Section \ref{sec:sparification}).  
	In the CS-based works reviewed above, the pilot dimension depends on the channel sparsity level $s$ (number of non-zero components in the 
	angle/beam-space domain). In fact, standard CS theory states that stable sparse signal reconstruction is possible using 
	$\Tdl = O(s \log M)$ measurements.\footnote{As commonly defined in the CS literature, 
		we say that a reconstruction method is stable if the resulting MSE vanishes as $1/\SNR$, where  $\SNR$ denotes the Signal-to-Noise Ratio of the measurements.} 
	In a rich scattering situation, $s$ is large or may in fact vary from user to user or in different cell locations. Even if the channel support is known, one needs at least $s$ measurements for a stable channel estimation.
	Hence, these CS-based methods (including the ones having access to support information) may or may not work well, depending on the propagation environment. 
	In order to allow channel estimation with an {\em assigned} pilot dimension $\Tdl$,  we use the DL covariance information 
	in order to design an optimal {\em sparsifying precoder}. This is a linear transformation that depends only
	on the channel second order statistics (estimated DL covariances) that imposes that the effective channel matrix  (including the precoder) 
	has large rank and yet each column has sparsity not larger than $\Tdl$. In this way, our method is not at the mercy of nature, i.e. it is flexible with respect to various types of environments and channel sparsity orders.
	We cast the optimization of the sparsifying precoder as a Mixed Integer Linear Program (MILP),  which can be efficiently solved using standard 
	off-the-shelf solvers.
\section{System Setup}\label{sec:sys_setup}

We consider a directional channel propagation model formed by multiple multipath components (MPCs), each corresponding to 
a scattering cluster characterized by a certain angle width and AoA direction. 
In addition, as in  \cite{rao2014distributed}, we consider the possibility that different users have partially overlapped
multipath components. An example of such spatially consistent scattering model is provided by the COST 2100 channel model \cite{liu2012cost}, where 
each MPC is associated to a visibility region, and users inside its visibility region are coupled with the BS array through the corresponding scattering cluster 
(see Fig. \ref{COST2100}). 

\begin{figure}[t]
	\centering
	\includegraphics[width=8cm]{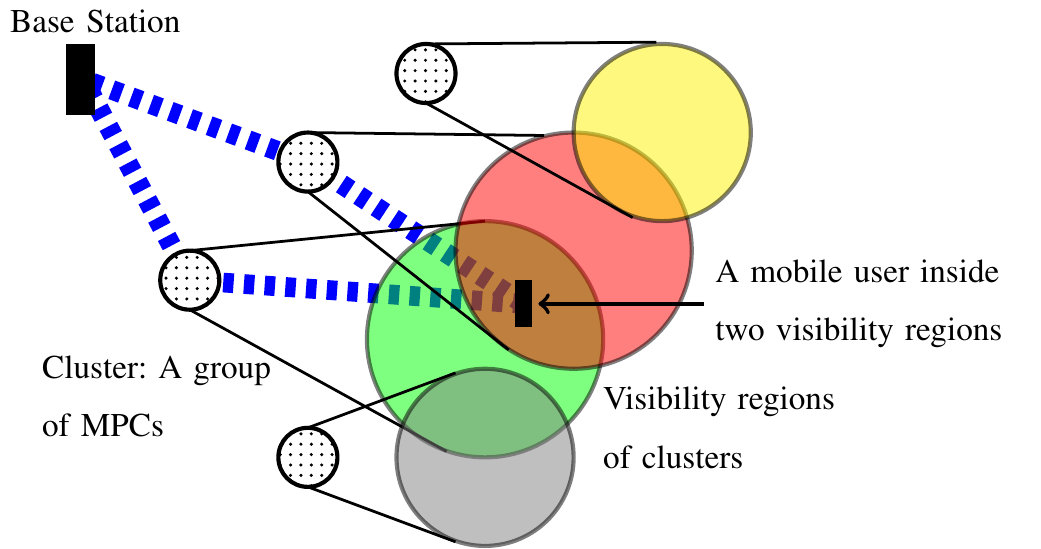}
	\caption{\small A sketch of the clusters and visibility regions in the COST 2100 model.}
	\label{COST2100}
\end{figure}

This model implies that the scattering geometry of the channel between the BS antenna array and the UE antenna 
remains constant over time intervals corresponding to the UE remaining in the same intersection of visibility regions. 
Since moving across the regions occurs at a time scale much larger than moving across one wavelength, 
it is safe to assume that the channel scattering geometry is locally stationary over intervals much longer than 
the time scale of the transmission of channel codewords. 
Such fixed geometry yields the so-called \textit{Wide Sense Stationary Uncorrelated Scattering} (WSSUS) channel model, for which the channel
vectors evolve in time according to a WSS processes. Also, we use the ubiquitous  block-fading approximation, and assume that the channel random 
process can be approximated as locally piecewise constant over blocks of 
$T$ time-frequency symbols, where $T \approx W_c T_c$, $W_c$ denoting the channel coherence bandwidth and $T_c$ denoting the channel coherence time \cite{tse2005fundamentals}.
We consider a BS equipped with an ULA with $M \gg 1$ antennas and single-antenna UEs.\footnote{ The approach of this paper can be immediately generalized to
UPAs for 3-dim beamforming. Here we restrict to a planar geometry for the sake of simplicity.} 
In an FDD system, communication takes place over two disjoint frequency bands. 
The UEs transmit to the BS over the frequency interval $[\ful-\frac{W_{\rm ul}}{2},\ful+\frac{W_{\rm ul}}{2}]$, 
where $\ful$ is the UL carrier frequency and $W_{\rm ul}$ is the UL bandwidth. 
Likewise, the BS transmits to the UEs over the frequency band $[\fdl-\frac{W_{\rm dl}}{2},\fdl+\frac{W_{\rm dl}}{2}]$ where $\fdl$ is the DL carrier frequency 
and $W_{\rm dl}$ is the DL bandwidth. 
The channel bandwidth is always much less than the carrier frequency, i.e. $\frac{W_{\rm ul}}{\ful}\ll 1$,  $\frac{W_{\rm dl}}{\fdl}\ll 1$. 
Let $\alpha = \frac{\fdl}{\ful}$ denote the ratio between the DL and the UL carrier frequencies. Notice that in FDD systems in operation 
today, we always have $\alpha > 1$ (e.g., see \cite{rel14}). A general form for the above WSSUS channel model in the time-frequency-antenna domain is given by
\begin{equation}\label{eq:ch_vec}
\hv (t,f)= \int_{\Theta} \rho(t,d\theta) \av(\theta,f) \in \bC^M,
\end{equation}
where $\Theta:=[-\theta_{\max}, \theta_{\max})$ is the angular range scanned by the ULA, the vector $\av(\theta,f) \in \bC^M$ 
is the {\em array response} at frequency $f$ and angle $\theta$, with $m$-th element given by
	\begin{equation}\label{eq:a_vec}
		[\av(\theta,f) ]_m= e^{j 2\pi \frac{f}{c_0} m d \sin\theta},
	\end{equation} 
	where $c_0$ denotes the speed of light and $d$ the distance between two consecutive antennas, and  $\rho(t,d\theta)$ is a random gain dependent on the time  $t$ and the angle range $[\theta,\theta + d\theta]$. 
	We model
	$\rho(t,d\theta)$ to be a {\em zero-mean} Gaussian stochastic process with independent increments respect to $\theta$ (uncorrelated scattering) and WSS with respect to $t$. The angular autocorrelation function is given by
	\begin{equation}\label{eq:autocorr_of_ro}
	\bE \left[\rho(t,d\theta) \rho(t,d\theta')\right] = \gamma (d\theta) \delta (\theta - \theta'),
	\end{equation}
	where $\gamma (d\theta)$ is the channel ASF,  modeling the power received from scatterers located at any angular interval. 
	It is convenient to assume that $\gamma (d\theta)$ is a normalized density function, such that $\int_{\Theta} \gamma (d\theta) = 1$. 
	Based on the narrow-band assumption we consider the array response to be a constant function of frequency over each of the UL and DL bands 
	separately and write $\av_{\rm ul}(\theta) := \av(\theta,\ful)$ and $\av_{\rm dl}(\theta) := \av(\theta,\fdl)$. 
	We let $d = \kappa \frac{\lambdaul}{2\sin (\theta_{\max})}$, where $\lambdaul=\frac{\ful}{c_0}$ is the UL carrier wavelength and $\kappa$ 
	is the spatial oversampling factor, usually (including here) set to $\kappa = 1$. With this definition we have that $[\av_{\rm ul}(\theta) ]_m = e^{j m \pi \frac{\sin (\theta)}{\sin (\theta_{\max})}}$ and $[\av_{\rm dl}(\theta) ]_m = e^{j m \pi \alpha \frac{\sin (\theta)}{\sin (\theta_{\max})}}$. 
	Notice that the exponents of the array response elements for UL and DL differ by the factor $\alpha$, which is typically slightly larger than 1 
	(e.g., for the LTE-IMT bands  we have $\alpha = \frac{2140}{1950} \approx 1.1$ \cite{rel14}).
	
The channel vector covariance matrix is thereby given as follows
\begin{equation}\label{eq:cov_mat}
	\Cm_{\hv} (f) = \bE \left[ \hv (t,f) \hv (t,f)^\herm \right] = \int_{\Theta} \gamma (d\theta) \av (\theta,f) \av (\theta,f)^\herm, 
\end{equation}
which is time-invariant due to stationarity. The dependence of the covariance matrix on frequency is due to the fact that, as discussed before, 
the array response vector is a function of frequency.  The covariance matrix is Toeplitz positive semidefinite Hermitian and hence can be described by its first column $\cv (f)$ as $\Cm_{\hv} (f) = \Tc \left(\cv (f)\right)$,\footnote{
	For $\xv \in \bC^M$, we let $\Tc (\xv)$ denote the Toeplitz Hermitian matrix with first column $\xv$, i.e., with $(i,j)$-th element
	$[\Tc(\xv)]_{i,j} = x_{i-j}$ for $i \geq j$ and $[\Tc(\xv)]_{i,j} = x^*_{|i-j|}$ for $i < j$. If $\xv$ is a sampled autocorrelation function, 
	then $\Tc (\xv)$ is positive semidefinite.}  
where the first column is given by $\cv (f) = \int_{\Theta} \gamma (d\theta) \av (\theta,f)$. 
We denote UL and DL covariance matrices by $\Cul:=\Cm_{\hv} (\ful)$ and $\Cdl:=\Cm_{\hv} (\fdl)$, respectively. 
\section{DL Covariance Estimation from UL Pilots}\label{sec:DL_cov_est}
Our proposed DL covariance estimation method exploits the assumption that the channel ASF 
is the same for UL and DL (angular reciprocity) \cite{hugl2002spatial,ali2017millimeter,xie2017channel}. Unlike previous works, we do not assume the ASF to be sparse, or to consist of only ``discrete" components. In fact, as we have shown in a companion paper \cite{haghighatshoar2018multi}, any estimate of the ASF that is real, positive and consistent with the UL covariance, regardless of being sparse, is good enough for the purpose of DL covariance estimation.
\subsection{Uplink covariance estimation}
Since the user channel vectors are mutually independent, 
and we assume Additive White Gaussian Noise (AWGN), the estimation of each user channel covariance in the UL is decoupled and we can focus on the estimation of 
a generic user. The received UL pilot observation during the $i$-th UL coherence block, after projecting over the orthogonal pilot
sequence of the given generic user, is given by $\yv[i] = \bfh_{\rm ul}[i] + \bfn[i]$ (see \cite{Marzetta-TWC10}),
where $\bfh_{\rm ul}[i]$ denotes the generic user channel vector during the $i$-th coherence block
and where $\bfn \sim \cg({\bf 0}, \sigma^2 \Id_M)$ is the measurement noise vector. 
Collecting a window of $N_{\rm ul}$ UL measurements and assuming the noise variance $\sigma^2$ to be known we estimate the UL covariance as follows. We first calculate the sample covariance matrix as $ \tilde{\Cm}_{\rm ul} =  \frac{1}{N_{\rm ul} } \sum_{i=1}^{N_{\rm ul}} \yv[i] \yv[i]^\herm$. The sample covariance is not necessarily Toeplitz and therefore, to improve the estimate, we project it to the Toeplitz, positive semidefinite cone using the following convex program as suggested in \cite{miretti2018fdd},  
\begin{equation}\label{eq:ul_samp_cov}
 \hat{\Cm}_{\rm ul} = \underset{\Xm \in \Tm_+^M}{\arg\min} ~ \Vert \Xm -  \left( \tilde{\Cm}_{\rm ul} - \sigma^2 \mathbf{I}_M \right)\Vert_F, 
\end{equation}
 where $\Tm_+^M$ is the cone of Toeplitz, Hermitian, positive semidefinite $M\times M$ matrices and $\Vert \cdot \Vert_F$ is the Frobenius norm. Being a Toeplitz Hermitian matrix, $ \hat{\Cm}_{\rm ul}$ can be fully described by its first column which is denoted by $\hat{\cv}_{\rm ul}$ hereafter.
\subsection{Estimation of the channel ASF}

Define $\Gc$ as a uniform grid consisting of $G\gg M$ discrete angular points $\{\theta_i\}_{i=1}^G$, 
where each point is given by $\theta_i=\sin ^{-1} \left((-1 + \frac{2(i-1)}{G}) \sin (\theta_{\max})\right) \in \Theta$, 
and define $\Gm \in \bC^{M\times G}$ to be a matrix whose $i$\textsuperscript{th} column is given by $\frac{1}{\sqrt{M}}\av_{\rm ul} (\theta_i),~ i\in [G]$. A discrete approximation of the ASF $\gamma$ on the grid $\Gc$ can be written as $\gamma (\rmd \theta)\approx \sum_{i=1}^G [\zv]_i \delta (\theta - \theta_i) $ for some vector $\zv \in \bR^{G}_+$. We find $\zv$ by solving a non-negative least squares (NNLS) convex optimization program \cite{haghighatshoar2018multi}:
\begin{equation}\label{eq:conv_problem_1}
\zv^\ast = \underset{\zv \in \bR_{+}^{G}}{\arg\min} ~ \Vert \Gm \zv - \hat{\cv}_{\rm ul}\Vert.
\end{equation}
The particularly desirable property of NNLS is that, it yields a real, positive approximation of the ASF and, minimizes the $\ell_2$ distance of its generated UL covariance samples $\Gm \zv$ and the estimated UL covariance samples $\hat{\cv}_{\rm ul}$ to satisfy a data consistency constraint. In fact, as we show in \cite{haghighatshoar2018multi}, positivity and data consistency are the only two requirements needed for guaranteeing a stable DL covariance estimation. Furthermore, 
the NNLS solution can be efficiently computed via several convex optimization techniques \cite{bertsekas2015convex}. 
By solving \eqref{eq:conv_problem_1}, the estimated discretized approximation of the ASF is simply given as $\hat{\gamma} (d\theta)= \sum_{i=1}^G [\zv^\ast]_i \delta (\theta - \theta_i) $.

\subsection{Covariance extrapolation via Fourier transform resampling}

Building on the theory developed in our companion paper \cite{haghighatshoar2018multi}, 
the problem of extrapolating the estimated UL covariance matrix to the DL frequency can be seen as 
the resampling of the Fourier transform of the channel ASF. 
To see this, notice that the $m$-th components of the first column $\cv_{\rm ul}$ of $\Cm_{\rm ul}$ are given by  
\begin{equation} \label{gamma-sampled}
[\cv_{\rm ul}]_m= \int_{\Theta} \gamma (d\theta) e^{j m \pi \frac{\sin \theta}{\sin \theta_{\max}}}=\int_{-1}^1 \gamma (d\xi) e^{jm \pi \xi}, ~ m\in[M], 
\end{equation}
where we introduce the change of variable $\xi = \frac{\sin \theta}{\sin \theta_{\max}}$. 
Define the continuous Fourier transform of the positive measure $\gamma (d\xi)$ as $\check{\gamma} (x) = \int_{-1}^1 \gamma (d\xi) e^{j x \pi \xi}$. 
Then it is clear from (\ref{gamma-sampled}) that $[\cv_{\rm ul}]_m = \check{\gamma} (m),~ m\in [M]$. 
In words, the first column of the UL covariance matrix is simply a sampling of the Fourier transform of the positive measure $\gamma (d\xi)$ at points $m = 0 , \ldots, M-1$. 
Taking similar steps, one can show that the components of the first column of the DL covariance matrix are given by $[\cv_{\rm dl}]_m = \int_{-1}^1 \gamma (d\xi) e^{j\alpha m \pi  \xi}, ~ m\in[M] $ and hence $[\cv_{\rm dl}]_m  = \check{\gamma}(\alpha m),~ m\in [M]$. 
Estimating the DL covariance from the UL covariance is equivalent to resampling $\check{\gamma}(\cdot)$ 
over a grid $\{0,\alpha,2\alpha,\ldots,(M-1)\alpha\}$, knowing its samples at the integer grid 
$\{0,1,2,\ldots,M-1\}$. 
Summarizing, the proposed DL covariance estimation method consists of the following steps:
 	1) Estimate a discrete approximation of the positive measure $\gamma (d\theta)$ using the 
	the UL sample covariance estimator and solving \eqref{eq:conv_problem_1}. 
	The samples of the Fourier transform of this measure on the grid $\{0,\ldots,M-1\}$ asymptotically converge to those generated 
	from the true angular scattering function \cite{haghighatshoar2015channel} for large sample size $N_{\rm ul}$.\\
2) Calculate the Fourier transform of the estimated measure on the grid $\{\alpha m \}_{m=0}^{M-1}$ to obtain the estimated DL antenna autocorrelation function
\begin{equation}\label{eq:DL_first_col}
[\hat{\cv}_{\rm dl}]_m = \sum_{i=1}^{G} \hat{\gamma} (\theta_i) e^{j\alpha (m - 1) \pi \frac{\sin \theta_i }{\sin \theta_{\max}}}, ~ m \in [M].
\end{equation}
The resulting DL covariance matrix is given by the Toeplitz completion 
$ \hat{\Cm}_{\rm dl}  =\Tc (\hat{\cv}_{\rm dl})$.
As a final remark in this section, notice that the above DL covariance estimation method does not rely on particular features of the channel ASF. 
For example, it does not require that the ASF has a sparse or discrete support, as needed in other ad-hoc methods
(e.g., see \cite{vasisht2016eliminating,xie2017channel,xie2017unified}). 

\subsection{Circulant approximation of the DL covariance matrices}  \label{sec:circulant}

 The DL covariance estimation from UL pilot signals is performed for all the users $k\in [K]$ at the BS. These covariance matrices are Toeplitz 
 by construction, 
 due to the structure of the ULA as described before. 
 In Section \ref{sec:sparification} we will introduce the novel idea of active channel sparsification where, for a given DL pilot dimension, 
 the BS selects a set of angular directions to transmit data to the users, such that the number of DL data streams 
 that the system can support is maximized. 
 A necessary step before performing sparsification is that all of the estimated DL covariance matrices 
 share a common set of eigenvectors, namely, the same virtual beam-space representation. 
 In the massive MIMO regime where $M \gg 1$, this is possible by considering the circulant approximation of Toeplitz matrices  that follows as an application of Szeg\"o
 Theorem (see details in \cite{adhikary2013joint} and references therein). 
 Let $\Cdlk$ denote the estimated DL channel covariance 
 of user $k$ for $k \in [K]$, where from now on we shall drop the subscript $``{\rm dl}"$ since  it is clear from the context, as we consider 
 only DL multiuser MIMO transmission. Define the diagonal matrices $\mathring{\Lambdam}_k,~ k \in [K]$ for which $[\mathring{\Lambdam}_k]_{m,m} = [\Fm^\herm \Cm_k \Fm]_{m,m}$, where $\Fm$ is the $M\times M$ DFT matrix, whose $(m,n)$-th entry is given by $[\Fm]_{m,n}=\frac{1}{\sqrt{M}} e^{-j2\pi \frac{mn}{M}},~m,n\in [M]$. 
 There are several ways to define a circulant approximation \cite{zhu2017asymptotic}, among which we choose the following:  
 \begin{equation}\label{eq:circ_approx_2}
\Cdlcirck= \Fm \mathring{\Lambdam}_k\Fm^\herm.
 \end{equation}
According to Szeg\"o's theorem, for large $M$, $\mathring{\Lambdam}_k$ converges to the diagonal eigenvalue matrix $\Lambdam_k$ of $\Cm_k$, i.e. $\mathring{\Lambdam}_k \rightarrow \Lambdam_k$ as $M\rightarrow \infty$. 
Hence, within a small error for large $M$, the sought set of (approximate) common eigenvectors for all the users 
is provided by the columns of the $M \times M$ DFT matrix. 
As a consequence, the DL channel covariance of user $k$ is characterized simply 
via a vector of eigenvalues $\lambdav^{(k)} \in \bR^M$, with $m$-th element 
$[\lambdav^{(k)}]_m = [\mathring{\Lambdam}^{(k)}]_{m,m}$. 
In addition, the DFT matrix forms a unitary basis 
for (approximately) expressing any user channel vector via an (approximated) Karhunen-Loeve expansion. In particular, 
let $\fv_m:=[\Fm]_{\cdot,m}$ denote the $m$-th column of $\Fm$. We can express the DL channel vector of user $k$ 
as 
\begin{equation} \label{approximate-KL}
\hv^{(k)} \approx \sum_{m=0}^{M-1} g_{m}^{(k)} \sqrt{[\lambdav^{(k)}]_m} \,\fv_m, 
\end{equation}
where $g_{m}^{(k)} \sim \cg (0,1) $ are i.i.d. random variables. The columns of $\Fm$ are very similar to array response vectors and in fact, recalling equation \eqref{eq:a_vec}, we have that $\fv_m = \frac{1}{\sqrt{M}} \av_{\rm dl} \left(\sin^{-1} (\frac{\lambdadl}{d} \frac{m}{M})\right)$. Hence, each column with index $m\in [M]$ of the DFT matrix can be seen as the array response to an angular direction and 
$[\lambdav^{(k)}]_m$ can be seen as the power of the channel vector associated with user $k$ along that direction. Due to the limited number of local scatterers as seen at the BS and the large number of antennas of the array, only a few entries of $\lambdav^{(k)}$ are significantly large, implying that the DL channel vector $\hv^{(k)}$ is sparse in the Fourier basis.  This sparsity in the beam-space domain is precisely what has been exploited in the CS-based works discussed in Section \ref{cs-based-works}, 
in order to reduce the DL pilot dimension $\Tdl$. It is also evident that this channel representation combined with the geometrically consistent model
reviewed in Section \ref{sec:sys_setup} yields the common sparsity across users, as exploited by J-OMP in \cite{rao2014distributed}. As seen 
in the next section, our proposed approach does not rely on any intrinsic channel sparsity assumption, but adopts a novel artificial sparsification technique.

\section{Active Channel Sparsification and DL Channel Probing}  \label{sec:sparification}

In this section we consider the estimation of the {\em instantaneous realization} of the DL user channel vectors. 
As in \cite{adhikary2013joint}, we consider the concatenation of the physical channel with a fixed precoder, i.e., a linear transformation that may depends on the 
user channel statistics (notably, on their covariance matrices estimated as explained in Section \ref{sec:DL_cov_est}), 
but is independent of the instantaneous channel realizations, which in fact must be estimated via 
the closed-loop DL probing and channel state feedback mechanism as discussed in Section \ref{sec:intro}. 

The BS transmits a training space-time matrix $\Psim$ of dimension $\Tdl \times M'$, such that each 
row $\Psim_{i,.}$ is transmitted simultaneously from the $M' \leq M$ inputs of a precoding matrix $\Bm$ of dimension $M' \times M$, and where
$M'$ is a suitable intermediate dimension that will be determined later. 
The precoded DL training length (in time-frequency symbols) spans 
therefore $\Tdl$ dimensions, and the DL training phase is repeated at each DL slot of dimension $T$. 
Stacking the $\Tdl$ DL training symbols in a column vector, the corresponding observation at the UE $k$ receiver is given by 
\begin{equation}\label{eq:cs_eq_1}
{\bfy}^{(k)}= {\bf\Psi} \bfB {\bfh}^{(k)} + {\bfn}^{(k)} = {\bf\Psi} \heffk + {\bfn}^{(k)},
\end{equation}
where $\Bm$ is the precoding matrix, $\hv^{(k)}$ is the channel vector of user $k$, and we define
$\heffk := \Bm \hv^{(k)}$ as the effective channel vector, formed by the concatenation of the actual DL channel (antenna-to-antenna) 
with the precoder $\Bm$.  The measurement noise is AWGN with distribution $\bfn^{(k)} \sim \cg({\bf 0},\Nvar {\bf I}_{\Tdl})$. 
The training matrix and precoding matrix are normalized such that  
\begin{equation} \label{power-training-phase}
 \trace ( \Psim \Bm \Bm^\herm \Psim^\herm ) = \Tdl \Pdl, 
 \end{equation}
 where $\Pdl$ denotes the total BS transmit power and we define the DL SNR as $\SNR = \Pdl/\Nvar$. 
 Notice that most works on channel estimation focus on the estimation of the  actual channels $\{{\bfh^{(k)}}\}$. 
This is recovered in our setting by letting $\Bm = \Id_M$. However, our goal here is to design a {\em sparsifying precoder} $\Bm$ such that each user effective channel has low dimension (in the beam-space representation) 
and yet the collection of effective channels for $k \in [K]$ form a high-rank matrix. 
In this way, each user channel can be estimated using a small pilot overhead $\Tdl$, but the BS is still able to serve
many data streams using spatial multiplexing in the DL (in fact, as many as the rank of the effective matrix). 

\subsection{Necessity and implication of stable channel estimation} \label{sec:stability}

For simplicity of exposition, in this section we assume that the 
channel representation \eqref{approximate-KL} holds exactly and that the eigenvalue vectors $\lambdav^{(k)}$ have support  
$\Sc_k = \{ m : [\lambdav^{(k)}]_m \neq 0 \}$ with sparsity level $s_k = |\Sc_k|$.
We hasten to say that the above are convenient {\em design assumptions}, made in order to obtain a tractable problem, and that 
the  precoder designed according to our simplifying assumption is applied to the actual physical channels. 
Under these assumptions, the following lemma
yields necessary and sufficient conditions of stable estimation of 
the channel vectors ${\bfh^{(k)}}$. 

\begin{lemma}\label{lem:stable_rec}
Consider the sparse Gaussian vector $\hv^{(k)}$ with support set $\Sc_k$ given by the RHS of (\ref{approximate-KL}). 
Let $\widehat{\hv}^{(k)}$ denote any estimator for $\hv^{(k)}$ 
based on the observation\footnote{Note that this coincides with 
(\ref{eq:cs_eq_1}) with $\Bm=\mathbf{I}_M$, i.e., without the sparsifying precoder.} 
$\yv^{(k)} = {\bf\Psi} {\bfh}^{(k)} + {\bfn}^{(k)}$,  
and  let $\Rm_e = \EE[ (\hv^{(k)} - \widehat{\hv}^{(k)}) (\hv^{(k)} - \widehat{\hv}^{(k)})^\herm ]$ denote the corresponding estimation error covariance matrix.  
If  $\Tdl \ge s_k$ there exist pilot matrices $\Psim \in \CC^{\Tdl \times M}$ for which $\lim_{\Nvar \downarrow 0} \trace(\Rm_e) = 0$ for all support 
sets $\Sc_k : |\Sc_k| = s_k$. 
Conversely,  for any support set $\Sc_k : |\Sc_k| = s_k$ any pilot matrix $\Psim \in \CC^{\Tdl \times M}$ 
with  $\Tdl < s_k$ yields $\lim_{\Nvar \downarrow 0} \trace(\Rm_e) > 0$.
	\hfill $\square$
\end{lemma}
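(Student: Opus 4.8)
The plan is to reduce the estimation of $\hv^{(k)}$ to that of its $s_k$ active Gaussian coefficients, and then to show that stability is governed entirely by a rank condition on the pilot matrix restricted to the active beam directions. First I would use the (assumed exact) expansion \eqref{approximate-KL} to write $\hv^{(k)} = \Bm_0\,\gv$, where $\Bm_0 := \Fm_{\Sc_k}\Lambdam_{\Sc_k}^{1/2}$ is the $M \times s_k$ matrix collecting the active DFT columns $\{\fv_m : m \in \Sc_k\}$ scaled by $\sqrt{[\lambdav^{(k)}]_m}$, and $\gv \in \CC^{s_k}$ gathers the associated i.i.d.\ $\cg(0,1)$ coefficients $g_m^{(k)}$, $m \in \Sc_k$. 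Since the columns of $\Fm$ are orthonormal and the active eigenvalues are strictly positive, $\Bm_0$ has full column rank $s_k$; in particular $\Bm_0^\herm\Bm_0 = \Lambdam_{\Sc_k} \succ 0$, so recovering $\hv^{(k)}$ is equivalent to recovering $\gv$.

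Next I would appeal to the standard linear Gaussian (Bayesian) estimation formula. Writing $\yv^{(k)} = \Am\gv + \bfn^{(k)}$ with $\Am := \Psim\Bm_0 \in \CC^{\Tdl \times s_k}$, the posterior covariance of $\gv$ is $(\Id_{s_k} + \Nvar^{-1}\Am^\herm\Am)^{-1}$, so the MMSE error covariance of $\hv^{(k)}$ is $\Rm_e^{\rm mmse} = \Bm_0(\Id_{s_k} + \Nvar^{-1}\Am^\herm\Am)^{-1}\Bm_0^\herm$. Because the conditional mean minimizes the error covariance in the Loewner order, every estimator obeys $\trace(\Rm_e) \ge \trace(\Rm_e^{\rm mmse})$, so it suffices to analyze the latter. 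Using $\Bm_0^\herm\Bm_0 = \Lambdam_{\Sc_k}$ I would rewrite $\trace(\Rm_e^{\rm mmse}) = \trace\!\big(\Lambdam_{\Sc_k}(\Id_{s_k} + \Nvar^{-1}\Am^\herm\Am)^{-1}\big)$ and let $\Nvar \downarrow 0$. Since $\Lambdam_{\Sc_k} \succ 0$, this limit is zero if and only if $(\Id_{s_k} + \Nvar^{-1}\Am^\herm\Am)^{-1} \to 0$, i.e.\ if and only if $\Am^\herm\Am \succ 0$, which (as $\Bm_0$ is injective) is equivalent to $\rank(\Psim\Fm_{\Sc_k}) = s_k$.

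The two claims then reduce to dimension counting. For the converse, if $\Tdl < s_k$ then $\Psim\Fm_{\Sc_k}$ has only $\Tdl$ rows, hence $\rank(\Psim\Fm_{\Sc_k}) \le \Tdl < s_k$ for every $\Psim$; thus $\Am$ is rank-deficient and $\lim_{\Nvar\downarrow 0}\trace(\Rm_e) > 0$ for every estimator and every support set, as required. For achievability with $\Tdl \ge s_k$, a single fixed support set admits full-rank $\Psim\Fm_{\Sc_k}$ whenever $\Psim$ is generic; the delicate point is to produce one $\Psim$ that is simultaneously full-rank on all $\binom{M}{s_k}$ support sets. I would settle this by a genericity argument: for each fixed $\Sc_k$ the set of $\Psim$ with $\rank(\Psim\Fm_{\Sc_k}) < s_k$ is the common zero locus of the $s_k \times s_k$ minors of $\Psim\Fm_{\Sc_k}$, a proper algebraic subvariety of measure zero; the finite union of these subvarieties over all supports still has measure zero, so almost every $\Psim$ (e.g.\ one with i.i.d.\ entries, rescaled to meet the power constraint \eqref{power-training-phase}, which does not affect rank) is stable for all supports at once, establishing existence.

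The main obstacle I anticipate is precisely this simultaneous-support achievability step: a pilot matrix engineered for one sparsity pattern need not be full-rank for another, so one cannot simply concatenate constructions tailored to individual supports. The measure-zero argument circumvents an explicit construction and is the only part of the proof that goes beyond the MMSE formula and elementary rank counting; everything else is routine linear algebra.
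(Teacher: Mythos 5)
Your proof is correct and follows essentially the same route as the paper's: reduce to the $s_k$ active Karhunen--Loeve coefficients, express the MMSE error covariance, observe that stability is equivalent to $\rank(\Psim\Fm_{\Sc_k}) = s_k$, and finish with dimension counting for the converse and a genericity/random-pilot argument for achievability. The only differences are presentational --- you work with the information form of the posterior covariance where the paper applies the Sherman--Morrison--Woodbury identity to the covariance form, and you are slightly more explicit about MMSE optimality over all estimators and about covering all $\binom{M}{s_k}$ supports simultaneously.
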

\begin{proof}
	See appendix \ref{App:stable_rec_proof}.
	\end{proof}

As a direct consequence of Lemma \ref{lem:stable_rec}, we have that 
any scheme relying on intrinsic channel sparsity cannot yield stable estimation if $\Tdl < s_k$ for some users. Furthermore,  
we need to impose that the effective channel sparsity (after the introduction of the sparsifying precoder $\Bm$) 
is less or equal to the desired DL pilot dimension $\Tdl$.
It is important to note that the requirement of estimation stability is {\em essential} in order to achieve high spectral efficiency in high SNR conditions, irrespectively of the DL precoding scheme. 
In fact,  if the estimation MSE of the user channels does not vanish as $\Nvar \downarrow 0$, the system self-interference
due to the imperfect channel knowledge grows proportionally to the signal power, yielding a Signal-to-Interference plus Noise Ratio (SINR) that
saturates to a constant when SNR becomes large. Hence, for sufficiently high $\SNR$, the best strategy would consist 
of transmitting just a single data stream,  since any form of multiuser precoding would inevitably lead to an interference limited regime, 
where the sum rate remains bounded  while $\SNR \rightarrow \infty$ \cite{davoodi2016aligned}. In contrast, it is also well-known that 
when the channel estimation error vanishes as $O(\Nvar)$ for $\Nvar \downarrow 0$, the high-SNR sum rate 
behaves as if the channel was perfectly known and can be achieved by very simple linear precoding \cite{caire2010multiuser}. 
A possible solution to this problem consists of serving only the users whose channel support $s_k$ is not larger than $\Tdl$. 
This is assumed {\em implicitly} in all CS-based schemes  (see Section \ref{cs-based-works}),  and represents a major 
intrinsic limitation of the CS-based approaches. In contrast, by artificially sparsifying the user channels, we manage to serve 
all users given a fixed DL pilot dimension $\Tdl$.

\subsection{Sparsifying precoder optimization} \label{sec:sparseB}
Before proceeding in this section, we introduce some graph-theoretic terms \cite{diestel2005graph}. A bipartite graph is a graph whose vertices (nodes) can be divided into two sets $\Vc_1$ and $\Vc_2$, such that every edge in the set of graph edges $\Ec$ connects a vertex in $\Vc_1$ to one in $\Vc_2$. One can denote such a graph by $\Lc = (\Vc_1,\Vc_2,\Ec)$. A subgraph of $\Lc$ is a graph $\Lc' = (\Vc_1',\Vc_2',\Ec')$ such that $\Vc_1'\subseteq \Vc_1$, $\Vc_2'\subseteq \Vc_2$ and $\Ec'\subseteq \Ec$. With regards to $\Lc$, the following terms shall be defined and later used. \\
\textbullet ~\textbf{Degree of a vertex:} for a vertex $x\in \Vc_1\cup \Vc_2$, the degree of $x$ refers to the number of edges in $\Ec$ incident to $x$ and  is denoted by $\text{deg}_{\Lc}(x)$.\\
\textbullet ~\textbf{Neighbors of a vertex:}  the neighbors of a vertex $x\in \Vc_1\cup \Vc_2$ are the set of vertices $y\in \Vc_1\cup \Vc_2$ connected to $x$. This set is denoted by $\Nc_{\Lc}(x)$. \\
\textbullet ~\textbf{Matching:} a matching in $\Lc$ is a subset of edges in $\Ec$ without common vertices.\\
\textbullet ~\textbf{Maximal matching:} a maximal matching $\Mc$ of $\Lc$ is a matching with the property that if any edge outside $\Mc$ and in $\Ec$ is added to it, it is no longer a matching.\\
\textbullet ~\textbf{Perfect matching:} a perfect matching in $\Lc$ is a matching that covers all vertices of $\Lc$.

We propose to design the sparsifying precoder using a graphical model, where a bipartite graph is formed by a set of vertices representing users on one side and another set of vertices representing beams on the other side. An edge of the bipartite graph between a beam and a user represents the presence of that beam in the user angular profile, with its weight denoting the user channel power along that beam. Now, we wish to design the precoder $\Bm$ such that the support of the effective channels $\heffk = \Bm \hv^{(k)}$ is not 
larger than $\Tdl$ for all $k$, such that all users have a chance of being served. 
Let $\Hcoeffs = \Lm \odot \bG\in \bC^{M\times K}$ denote the matrix of DL channel coefficients expressed in the DFT basis (\ref{approximate-KL}),  
in which each column of $\Hcoeffs$ represents the coefficients vector of a user,  where $\Lm$ is a $M \times K$ matrix with elements 
$[\Lm]_{m,k} = \sqrt{[\lambdav^{(k)}]_m}$,  where $\bG \in \bC^{M\times K}$ has i.i.d. elements  $[\bG]_{m,k} = g_m^{(k)} \sim \cg (0,1)$, 
and where $\odot$ denotes the Hadamard (elementwise) product. 
Let $\Am$ denote a one-bit thresholded version of $\Lm$, such that $[\Am]_{m,k}=1$ if $[\lambdav^{(k)}]_m > {\sf th}$, where ${\sf th} > 0$ is a suitable
small threshold, used to identify the significant components,  and consider the $M \times K$ bipartite 
graph $\Lc = \left(\Ac,\Kc,\Ec\right)$ with adjacency matrix $\Am$ and weights $w_{m,k} = [\lambdav^{(k)}]_m$  on the edges $(m,k) \in \Ec$. 

Given a pilot dimension $\Tdl$, our goal consists in selecting a subgraph $\Lc'=\left(\Ac',\Kc',\Ec'\right)$ of $\Lc$ in which each node on either side of the graph has a degree at least 1 and such that: 
\begin{enumerate}
	\item For all $k \in \Kc'$ we have $\text{deg}_{\Lc'}(k) \le \Tdl$, where $\text{deg}_{\Lc'}$ denotes the degree of a node in the selected subgraph.
	\item The sum of weights of the edges incident to any node $k \in \Kc'$ in the subgraph $\Lc'$ is greater than a threshold, i.e. $\sum_{m\in \Nc_{\Lc'} (k)} w_{m,k} \ge \Pthresh, ~\forall k\in \Kc'$.
	\item The channel matrix $\Hcoeffs_{\Ac',\Kc'}$ obtained from $\Hcoeffs$ by selecting $a \in \Ac'$ (referred to as ``selected beam 
	directions") and $k\in \Kc'$ (referred to as ``selected users") has large rank.
\end{enumerate}    
The first criterion enables stable estimation of the effective channel of any selected user with only $\Tdl$ common pilot dimensions 
and $\Tdl$ complex symbols of feedback per selected user. 
The second criterion makes sure that the effective channel strength of any selected user is greater than a desired threshold, since we do not want to spend 
resources on probing and serving users with weak effective channels (where ``weak'' is quantitatively determined by the value of $\Pthresh$). 
Therefore $\Pthresh$ is a parameter that serves to obtain a tradeoff between the rank of the effective matrix (which ultimately determines the number of spatially multiplexed DL data streams) and the beamforming gain (i.e., the power effectively conveyed along each selected user effective channel).
The third criterion is motivated by the fact that the DL pre-log factor is given by ${\rm rank} ( \Hcoeffs_{\Ac',\Kc'} ) \times \max \{ 0 , 1 - \Tdl/T\}$, 
and it is obtained by serving a number of users equal to  the rank of the effective channel matrix. 
The following lemmas relate the rank of the effective channel matrix to a graph-theoretic quantity, namely,  the size of the maximal matching.
\begin{lemma}\label{lem:CUR}
	[Skeleton or ``$\Cm \Um \Rm$" decomposition \cite{goreinov1997theory}] Consider $\Hcoeffs \in \bC^{M\times K}$, of rank $r$. Let $\Qm$ be an $r\times r$ non-singular
	intersection submatrix obtained by selecting $r$ rows and $  r$ columns of $\Hcoeffs$. Then, we have $\Hcoeffs = \Cm \Um \Rm$,
where $\Cm \in \bC^{M\times r}$ and $\Rm \in \bC^{r\times K}$ are the matrices of the selected columns and rows forming the intersection $\Qm$ and $\Um=\Qm^{-1}$.
\hfill $\square$
\end{lemma}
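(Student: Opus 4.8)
The plan is to reduce the claimed factorization to a single blockwise identity, and then to show that the one nontrivial block equality is forced by the rank hypothesis through a Schur-complement argument.

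First I would observe that permuting the rows and columns of $\Hcoeffs$ leaves the statement unchanged: it merely reindexes the rows of $\Cm$ and the columns of $\Rm$ in the same way, while $\Qm$, and hence $\Um = \Qm^{-1}$, are unaffected. Thus, without loss of generality, I may take the selected $r$ rows and $r$ columns to be the leading ones and write $\Hcoeffs$ in $2\times 2$ block form
\[
\Hcoeffs = \begin{pmatrix} \Qm & \Pm \\ \Sm & \Wm \end{pmatrix}, \qquad
\Cm = \begin{pmatrix} \Qm \\ \Sm \end{pmatrix}, \qquad
\Rm = \begin{pmatrix} \Qm & \Pm \end{pmatrix},
\]
where $\Qm \in \bC^{r\times r}$ is the nonsingular intersection submatrix and $\Pm$, $\Sm$, $\Wm$ are the remaining blocks.

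Next I would simply multiply out the candidate factorization. Using $\Um = \Qm^{-1}$,
\[
\Cm \Um \Rm = \begin{pmatrix} \Qm \\ \Sm \end{pmatrix} \Qm^{-1} \begin{pmatrix} \Qm & \Pm \end{pmatrix}
= \begin{pmatrix} \Qm & \Pm \\ \Sm & \Sm \Qm^{-1} \Pm \end{pmatrix}.
\]
Comparing with the block form of $\Hcoeffs$, the $(1,1)$, $(1,2)$ and $(2,1)$ blocks already coincide, so the whole claim collapses to the single identity $\Wm = \Sm \Qm^{-1} \Pm$, i.e.\ the vanishing of the Schur complement.

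This last equality is the heart of the argument and the only place the global rank hypothesis enters. The plan is to use the block elimination identity
\[
\begin{pmatrix} \Qm & \Pm \\ \Sm & \Wm \end{pmatrix}
\begin{pmatrix} \Id & -\Qm^{-1}\Pm \\ 0 & \Id \end{pmatrix}
= \begin{pmatrix} \Qm & 0 \\ \Sm & \Wm - \Sm\Qm^{-1}\Pm \end{pmatrix},
\]
which is valid precisely because $\Qm$ is invertible. Right-multiplication by this (invertible, unit-triangular) matrix preserves rank; clearing the remaining $\Sm$ block by an analogous left-multiplication reduces the product to the block-diagonal form $\diag(\Qm,\, \Wm - \Sm\Qm^{-1}\Pm)$, whose rank is exactly $r + \rank(\Wm - \Sm\Qm^{-1}\Pm)$. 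Since the original matrix has rank $\rank(\Hcoeffs) = r$ by hypothesis, I conclude $\rank(\Wm - \Sm\Qm^{-1}\Pm) = 0$, so the Schur complement is the zero matrix, delivering the missing block identity and hence $\Hcoeffs = \Cm\Um\Rm$. I expect the only subtlety to be the rank bookkeeping in this final step: one must verify that the block-triangular matrix has rank \emph{equal to} $r + \rank(\text{Schur complement})$ — which follows by clearing both off-diagonal blocks with invertible elementary transformations — rather than merely bounding it below by $r$. Everything else is routine matrix multiplication, and the permutation reduction at the outset is purely notational.
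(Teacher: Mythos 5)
Your proof is correct and complete. The paper itself does not prove this lemma --- it simply cites \cite{goreinov1997theory} and moves on --- so there is no in-paper argument to compare against; your write-up supplies a self-contained derivation where the paper relies on the literature. The structure is the standard one: after the (purely notational) permutation reduction, the factorization $\Cm\Um\Rm$ reproduces the $(1,1)$, $(1,2)$ and $(2,1)$ blocks automatically, and the whole content of the lemma is the vanishing of the Schur complement $\Wm - \Sm\Qm^{-1}\Pm$, which you extract correctly from $\rank(\Hcoeffs)=r$ via the two unit-triangular block eliminations. The one point you flag as a possible subtlety --- that the block-diagonal form has rank \emph{exactly} $r+\rank(\Wm-\Sm\Qm^{-1}\Pm)$ --- is indeed the step that needs the equality rather than an inequality, and it holds because the rank of a block-diagonal matrix (square or not) is the sum of the ranks of its diagonal blocks, while the two elementary multipliers are invertible and hence rank-preserving. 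Nothing is missing.
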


\begin{lemma}\label{lem:matching}
	[Rank and perfect matchings] Let $\Qm$ denote an $r\times r$ matrix with some elements identically zero,
	and the non-identically zero elements independently drawn from a continuous distribution. Consider the associated
	bipartite graph with adjacency matrix $\Am$ such that $\Am_{i,j}=1$ if $\Qm_{i,j}$ is not identically zero, and $\Am_{i,j} = 0$ otherwise.
	Then, $\Qm$ has rank $r$ with probability 1 if and only if the associated bipartite graph contains a perfect matching. \hfill $\square$
\end{lemma}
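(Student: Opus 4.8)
The plan is to work directly with the Leibniz expansion of the determinant,
\[
\det(\Qm) = \sum_{\sigma \in S_r} \sign(\sigma) \prod_{i=1}^r \Qm_{i,\sigma(i)},
\]
and to exploit the exact correspondence between permutations $\sigma \in S_r$ and perfect matchings of the associated bipartite graph. A permutation $\sigma$ selects one entry from each row and each column, namely the entries $\{(i,\sigma(i))\}_{i=1}^r$, and this collection constitutes a perfect matching precisely when $\Am_{i,\sigma(i)} = 1$ for every $i$, i.e.\ when none of the selected entries is identically zero. Consequently the monomial $\prod_i \Qm_{i,\sigma(i)}$ vanishes identically (as a polynomial in the free entries) if and only if $\sigma$ does \emph{not} correspond to a perfect matching. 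This single observation drives both directions of the equivalence.

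For the converse direction (no perfect matching $\Rightarrow$ rank deficiency), I would argue that if the graph admits no perfect matching then every $\sigma \in S_r$ has at least one index $i$ with $\Am_{i,\sigma(i)} = 0$, so every term in the Leibniz sum is identically zero. Hence $\det(\Qm)$ is the zero polynomial and $\det(\Qm) = 0$ holds deterministically for every realization, which gives $\rank(\Qm) < r$ with probability $1$.

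For the forward direction (perfect matching $\Rightarrow$ full rank almost surely), the first step is to show that $\det(\Qm)$, viewed as a polynomial in the independent indeterminates $\{\Qm_{i,j} : \Am_{i,j}=1\}$, is \emph{not} identically zero. The key point is that distinct permutations yield distinct monomials: from the monomial $\prod_i \Qm_{i,\sigma(i)}$ one can read off the pairs $(i,\sigma(i))$ and thereby recover $\sigma$, so no cancellation can occur between the terms arising from different permutations. Therefore, if the graph contains at least one perfect matching, say the one associated with $\sigma^\ast$, the corresponding monomial survives with coefficient $\sign(\sigma^\ast) = \pm 1 \neq 0$, so $\det(\Qm)$ is a nonzero polynomial. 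The final step invokes the standard fact that the zero set of a nonzero polynomial has Lebesgue measure zero; since the nonzero entries are drawn independently from a continuous (absolutely continuous) distribution, the event $\{\det(\Qm) = 0\}$ has probability zero, whence $\rank(\Qm) = r$ with probability $1$.

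The main obstacle, and the only step requiring genuine care, is the no-cancellation argument in the forward direction: one must ensure that summing over the possibly many permutations cannot accidentally annihilate the surviving monomial. This is resolved cleanly by the injectivity of the map $\sigma \mapsto \prod_i \Qm_{i,\sigma(i)}$ at the level of monomials, which holds precisely because the free entries are treated as algebraically independent variables rather than as fixed numbers. Once $\det(\Qm)$ is known to be a nontrivial polynomial, the measure-zero conclusion is routine and completes the proof.
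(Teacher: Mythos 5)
Your proposal is correct and follows essentially the same route as the paper's own proof: the Leibniz expansion of $\det(\Qm)$, the identification of nonvanishing permutation terms with perfect matchings, and the almost-sure nonvanishing of a nonzero polynomial in the continuously distributed entries. In fact you are more careful than the paper on the one delicate point (distinct permutations give distinct monomials, so the surviving term cannot cancel), which the paper's proof leaves implicit.
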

\begin{proof}
	See appendix \ref{app:proof_lemma_matching}.
\end{proof}
		
A similar theorem can be found in \cite{tutte1947factorization}, but we provide a direct proof in Appendix \ref{app:proof_lemma_matching} for the sake of completeness.
Lemmas \ref{lem:CUR} and \ref{lem:matching} result in the following corollary, which is an original contribution of this work.
\begin{corollary}\label{corollary:matching}
	The rank $r$ of a random matrix $\Hcoeffs \in \bC^{M\times K}$ with either identically zero elements or elements independently drawn from a continuous distribution is given, with probability 1, by the size of the largest intersection submatrix whose associated bipartite graph (defined as in Lemma \ref{lem:matching}) contains a perfect matching.   \hfill $\square$
\end{corollary}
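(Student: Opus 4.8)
The plan is to prove the matching inequalities $\text{rank}(\Hcoeffs)\le r^*$ and $\text{rank}(\Hcoeffs)\ge r^*$ (the latter only almost surely), where I write $r^*$ for the size of the largest intersection submatrix whose associated bipartite graph admits a perfect matching. The first thing to observe is that $r^*$ is a purely combinatorial quantity: it depends only on the fixed zero-pattern of $\Hcoeffs$ (equivalently on the adjacency matrix $\Am$) and not on the realization of the continuous entries, so it is the same deterministic number for every outcome.

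For the upper bound I would argue deterministically. Fix any realization and set $r=\text{rank}(\Hcoeffs)$. By the skeleton decomposition of Lemma~\ref{lem:CUR} there exists an $r\times r$ intersection submatrix $\Qm$ that is nonsingular, i.e. $\det\Qm\neq 0$ for this realization. This is exactly the situation covered by the ``only if'' direction of Lemma~\ref{lem:matching}: expanding $\det\Qm=\sum_{\sigma}\text{sgn}(\sigma)\prod_i [\Qm]_{i,\sigma(i)}$ via the Leibniz formula, each permutation $\sigma$ corresponds to a candidate perfect matching of the $r\times r$ bipartite graph, and if that graph had no perfect matching then every product would contain an identically-zero factor, so $\det\Qm$ would be the zero polynomial and could not be nonzero for our realization. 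Hence the bipartite graph of $\Qm$ contains a perfect matching, and by the very definition of $r^*$ we get $r\le r^*$, for every realization.

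For the lower bound I would invoke the probabilistic (``if'') direction of Lemma~\ref{lem:matching}. By definition of $r^*$ there is an $r^*\times r^*$ intersection submatrix $\Qm^*$ whose bipartite graph contains a perfect matching; its non-identically-zero entries are a subset of the entries of $\Hcoeffs$ and hence are independent draws from a continuous distribution, so the hypotheses of Lemma~\ref{lem:matching} apply to $\Qm^*$. The lemma then yields $\text{rank}(\Qm^*)=r^*$ with probability one, and any full-rank $r^*\times r^*$ submatrix forces $\text{rank}(\Hcoeffs)\ge r^*$. Intersecting this probability-one event with the deterministic upper bound gives $\text{rank}(\Hcoeffs)=r^*$ almost surely, which is the assertion of the corollary.

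The main point requiring care is the bookkeeping between the two modes of reasoning rather than any single hard computation: the upper bound is genuinely deterministic (it uses only the existence statement of Lemma~\ref{lem:CUR} together with the zero-polynomial argument underlying the ``only if'' direction), while the lower bound is only almost sure (the ``if'' direction). A convenient feature is that a single submatrix $\Qm^*$ suffices for the lower bound, so no union bound is strictly needed; should one instead want the stronger simultaneous statement that \emph{every} perfect-matching submatrix attains its full rank, the finiteness of the index set of row/column selections guarantees that the union of the corresponding null events is still null, so the conclusion is unaffected.
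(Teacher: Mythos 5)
Your proof is correct and follows essentially the same route as the paper, which states the corollary as an immediate consequence of Lemmas \ref{lem:CUR} and \ref{lem:matching} without writing out the details; your two-inequality argument (deterministic upper bound via the skeleton decomposition plus the Leibniz-expansion ``only if'' direction, almost-sure lower bound via the ``if'' direction applied to a single maximal perfect-matching submatrix) is exactly the intended combination, carefully spelled out.
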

Obviously this corollary holds in our case where the non-zero elements of $\Hcoeffs$ are drawn from the complex Gaussian distribution. 
Using Corollary \ref{corollary:matching} this problem can be formulated as:
\begin{figure}[t]
	\centering
	\begin{subfigure}[b]{.35\textwidth}
		\centering
		\includegraphics[width=1\linewidth]{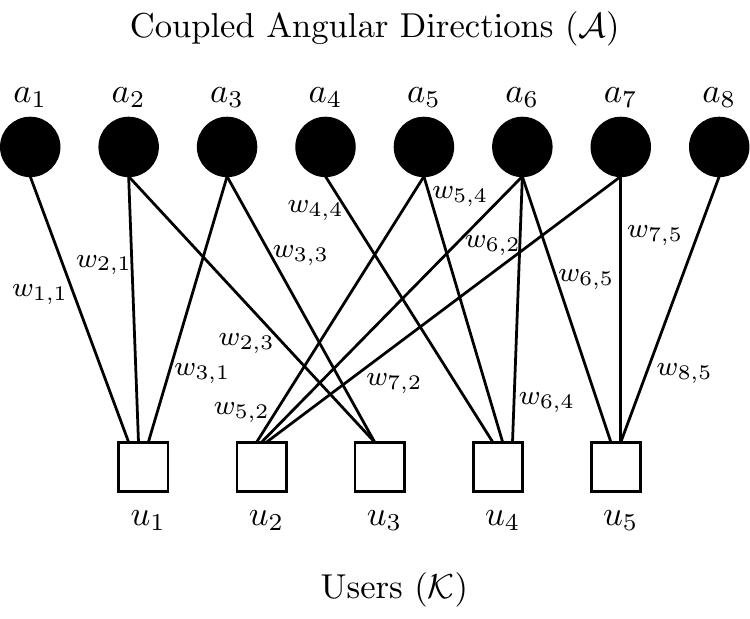}
		\caption{}
		\label{fig:bi_graph}
	\end{subfigure}%
	\begin{subfigure}[b]{.35\textwidth}
		\centering
		\includegraphics[width=.55\linewidth]{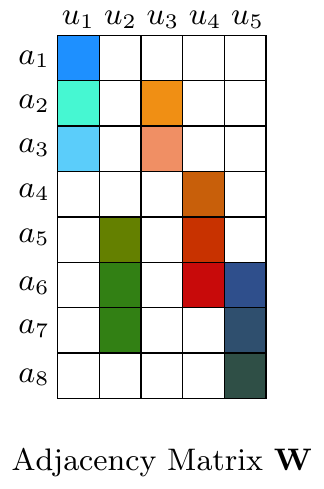}
		\caption{}
		\label{fig:adj_mat}
	\end{subfigure}
	\caption{\small (a) An example of a bipartite graph $\Lc$. (b) The corresponding weighted adjacency matrix $\Wm$.}
	\label{fig:test}
\end{figure}
\begin{problem}\label{problem:opt_prob_1}
	Let $\Tdl$ denote the available DL pilot dimension and let $\Mc (\Ac',\Kc')$ denote a matching of the subgraph $\Lc' (\Ac',\Kc',\Ec')$ 
	of the bipartite graph $\Lc (\Ac,\Kc,\Ec)$. Find the solution of the following optimization problem:	
	\begin{subequations}\label{eq:my_opt_1}
		\begin{align}
			& \underset{\Ac'\subseteq \Ac, \Kc' \subseteq \Kc}{\text{maximize}} && \left\vert \Mc\left(\Ac',\Kc'\right)\right\vert    \label{eq:my_opt_1-one}  \\
			& \text{subject to} &&  \text{deg}_{\Lc'} (k) \le \Tdl~ \forall k\in \Kc',  \label{eq:my_opt_1-two} \\
			& ~ &&  \hspace{-4mm} \sum_{a\in \Nc_{\Lc'} (k) } w_{a,k}  \ge \Pthresh, ~ \forall k\in \Kc'. \label{eq:my_opt_1-three}
		\end{align}
	\end{subequations} 
\hfill $\lozenge$
\end{problem}
The following theorem shows that Problem \ref{problem:opt_prob_1} can be solved in a tractable way.
\begin{theorem}\label{thm:MILP_formulation}
	The optimization problem in \eqref{eq:my_opt_1} is equivalent to the mixed integer linear program (MILP) below: 
	 
	 \begin{subequations}\label{opt:P_MILP_Thm}
	 	\begin{align}
	 	\Pc_{\text{MILP}}:~~	\underset{x_m,y_k,z_{m,k} }{\text{maximize}}  & ~~\sum_{m\in \Ac} \sum_{k \in \Kc} z_{m,k}   \label{eq:obj_1_thm} \\
	 	\text{subject to}  &  ~~~~z_{m,k} \le [\Am]_{m,k} ~~\forall m\in\Ac,k\in \Kc, \label{eq:first_thm}\\
	 	~~ & ~~~~ \sum_{k\in \Kc}z_{m,k} \le x_m ~~ \forall m\in \Ac,\\ 
	 	~~ & ~~~~ \sum_{m\in \Ac}z_{m,k} \le y_k ~~ \forall k\in \Kc,  \\
	 	~~ & ~~~~  \sum_{m\in \Ac} [\Am]_{m,k} x_m \le \Tdl y_k + M (1-y_k) ~~\forall k\in \Kc,\\
	 	~~ & ~~~~  \Pthresh \, y_k \le \sum_{m \in \Ac} [\Wm]_{m,k} x_m ~~\forall k \in \Kc,  \\
	 	~~ & ~~~~  x_m \le \sum_{k\in \Kc} [\Am]_{m,k} y_k ~~ \forall m\in \Ac, \\
	 	~~ & ~~~~  x_m, y_k \in \{0,1\} ~~\forall a\in \Ac,k\in \Kc,\\
	 	~~ & ~~~~  z_{m,k} \in [0,1] ~~\forall m\in \Ac,k\in \Kc \label{eq:last_thm},
	 	\end{align}
	 \end{subequations}
	 where $\Wm$ is the $|\Ac| \times |\Kc|$ weighted adjacency matrix in which $[\Wm]_{m,k} = w_{m,k}$ (see the example in Fig. \ref{fig:bi_graph} and Fig. \ref{fig:adj_mat}). The solution sub-graph is given by the set of nodes $\Ac' = \{m:x_m^\ast = 1\}$ and $\Kc' = \{k:y_k^\ast=1\}$, with $\{x_m^\ast\}_{m=1}^M$ and $\{y_k^\ast\}_{k=1}^K$ being a solution of \eqref{opt:P_MILP_Thm}. \hfill $\square$
\end{theorem}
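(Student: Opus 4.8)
The plan is to prove the equivalence by exhibiting a value-preserving correspondence between the feasible points of Problem~\ref{problem:opt_prob_1} and those of $\Pc_{\text{MILP}}$, and concluding that their optimal values coincide. Write $V_1$ for the optimal value of \eqref{eq:my_opt_1} and $V_M$ for that of $\Pc_{\text{MILP}}$. I read the binary variables as selection indicators, $x_m=\mathbbm{1}[m\in\Ac']$ and $y_k=\mathbbm{1}[k\in\Kc']$, so that any $(x,y)$ determines the induced subgraph $\Lc'=(\Ac',\Kc',\Ec')$ of $\Lc$, while the continuous variables $z_{m,k}$ encode a matching on $\Lc'$ whose cardinality is $\sum_{m,k}z_{m,k}$. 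The argument then splits into (i) mapping an optimal subgraph of Problem~\ref{problem:opt_prob_1} to a feasible MILP point of equal objective, giving $V_M\ge V_1$, and (ii) mapping an optimal MILP point back to a Problem-feasible subgraph whose matching number dominates the MILP objective, giving $V_M\le V_1$.

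First I would verify that the binary constraints are exactly the incidence relations defining a Problem~\ref{problem:opt_prob_1}-feasible subgraph, via the usual big-$M$ bookkeeping. The constraint $\sum_m[\Am]_{m,k}x_m\le\Tdl\,y_k+M(1-y_k)$ reads, when $y_k=1$, as $\text{deg}_{\Lc'}(k)\le\Tdl$ (the left side counts the selected beams adjacent to $k$), which is \eqref{eq:my_opt_1-two}, and is vacuous when $y_k=0$ since there are only $M$ beams. The constraint $\Pthresh\,y_k\le\sum_m[\Wm]_{m,k}x_m$ gives, when $y_k=1$, $\sum_{m\in\Nc_{\Lc'}(k)}w_{m,k}\ge\Pthresh$, i.e. \eqref{eq:my_opt_1-three}; since $\Pthresh>0$ and the weights are nonnegative, this also forces $\text{deg}_{\Lc'}(k)\ge1$ for every selected user, and it is vacuous when $y_k=0$. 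Finally $x_m\le\sum_k[\Am]_{m,k}y_k$ forces every selected beam to be adjacent to a selected user, i.e. $\text{deg}_{\Lc'}(m)\ge1$. Hence the $(x,y)$-constraints are in exact correspondence with the degree-budget, power-budget, and ``degree at least $1$ on both sides'' requirements on $\Lc'$.

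The heart of the argument is that, for any fixed feasible $(x,y)$, maximizing $\sum_{m,k}z_{m,k}$ over the remaining constraints returns exactly the maximum-matching number $\nu(\Lc')$. With $(x,y)$ binary, \eqref{eq:first_thm} and the coupling inequalities $\sum_k z_{m,k}\le x_m$, $\sum_m z_{m,k}\le y_k$ force $z_{m,k}=0$ off the edge set of $\Lc'$ and reduce, on $\Ec'$, to $\sum_k z_{m,k}\le1$, $\sum_m z_{m,k}\le1$, $z\ge0$, which is precisely the fractional matching polytope of the bipartite graph $\Lc'$. Since the node--edge incidence matrix of a bipartite graph is \emph{totally unimodular}, and appending the box constraints $z_{m,k}\le[\Am]_{m,k}$ of \eqref{eq:first_thm} (with integral right-hand sides) preserves total unimodularity, this polytope is integral; the linear objective therefore attains its maximum at a $0/1$ vertex, namely the indicator of a maximum matching, and its optimal value equals $\nu(\Lc')$ (equivalently, by K\"onig's theorem). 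This is exactly why relaxing $z$ to $[0,1]$ in \eqref{eq:last_thm} rather than to $\{0,1\}$ loses nothing, and why $\Pc_{\text{MILP}}$ can impose integrality only on $(x,y)$.

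Combining these facts closes both inequalities. For (i), take an optimal $\Lc'$ for \eqref{eq:my_opt_1} and a maximum matching $\Mc^\ast\subseteq\Ec'$; setting $x,y$ to the selection indicators of $\Lc'$ and $z_{m,k}=\mathbbm{1}[(m,k)\in\Mc^\ast]$ is MILP-feasible by the previous two paragraphs and has objective $|\Mc^\ast|=V_1$, so $V_M\ge V_1$. For (ii), any MILP-feasible $(x^\ast,y^\ast,z^\ast)$ induces a Problem-feasible subgraph $\Lc'(x^\ast,y^\ast)$, and since $z^\ast$ lies in its matching polytope, $\sum_{m,k}z^\ast_{m,k}\le\nu(\Lc'(x^\ast,y^\ast))\le V_1$, whence $V_M\le V_1$. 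I expect the only genuinely delicate point to be the integrality claim of the third paragraph: one must confirm that fixing $(x,y)$ to $0/1$ leaves the $z$-subsystem totally unimodular so that the matching LP is tight; everything else is the routine big-$M$ case analysis summarized above.
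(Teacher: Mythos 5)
Your proposal is correct and follows essentially the same route as the paper: binary indicators $x_m,y_k$ encoding the subgraph with the big-$M$ and power constraints matching \eqref{eq:my_opt_1-two}--\eqref{eq:my_opt_1-three}, followed by the observation that for fixed binary $(x,y)$ the $z$-subsystem is the bipartite (fractional) matching polytope of $\Lc'$, whose integrality justifies relaxing $z_{m,k}$ to $[0,1]$. The paper establishes that last integrality step by citing the integrality of the bipartite matching polytope (its Lemma 4, via Schrijver), whereas you derive it from total unimodularity of the incidence matrix --- the same fact proved the standard way --- and your explicit two-sided value comparison is just a cleaner packaging of the paper's constraint-by-constraint equivalence.
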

\begin{proof}
	See Appendix \ref{app:proof_milp_formulation}.
\end{proof}
The introduced MILP can be efficiently solved using an off-the-shelf optimization toolbox. The solution to this optimization, however, is not necessarily unique, i.e. there may exist several sub-graphs with the same (maximum) matching size. In order to limit the solution set we introduce a regularization term to the objective of \eqref{opt:P_MILP_Thm} to favor solutions containing more ``active" beams. The regularized form of \eqref{opt:P_MILP_Thm} is given as
\begin{equation}\label{opt:P_MILP_Reg}
\begin{aligned}
\Pc_{\text{MILP}}:~~	\underset{x_m,y_k,z_{m,k} }{\text{maximize}}  & ~~\sum_{m\in \Ac} \sum_{k \in \Kc} z_{m,k}   + \epsilon \sum_{m \in \Ac} x_m \\
\text{subject to}  &  ~~\{x_m,y_k,z_{m,k}\}_{m\in \Ac,k\in \Kc} \in \Sc_{\text{feasible}}, \\
\end{aligned}
\end{equation}  
where the feasibility set $\Sc_{\text{feasible}}$ encodes the constraints \eqref{eq:obj_1_thm}-\eqref{eq:last_thm}.
Here the regularization factor $\epsilon$ is chosen to be a small positive value such that it does not effect the matching size of the solution sub-graph. In fact choosing $\epsilon<\frac{1}{M}$ ensures this, since then $ \epsilon \sum_{m \in \Ac} x_m < 1$ and a solution to \eqref{opt:P_MILP_Reg} must have the same matching size as a solution to \eqref{opt:P_MILP_Thm}, otherwise the objective of \eqref{opt:P_MILP_Reg} can be improved by choosing a solution with a larger matching size. 

\subsection{Channel estimation and multiuser precoding}  \label{sec:precoding}  

For a given set of user DL covariance matrices, let $\left\{x_m^\ast \right\}_{m=1}^M$ and $\left\{y_k^\ast \right\}_{k=1}^K$ denote the MILP solution and
denote by $\Bc=\{m:x_m^\ast=1\} = \{m_1,m_2,\ldots, m_{M'}\}$ the set of selected beam directions of cardinality $|\Bc| = M'$ and 
by $\Kc = \{k: y^\ast_k = 1\}$ the set of selected users of cardinality $|\Kc| = K'$. 
The resulting sparsifying precoding matrix $\Bm$ in (\ref{eq:cs_eq_1}) is simply obtained 
as $\Bm = \Fm_{\Bc}^\herm$, where $\Fm_{\Bc} = [\fv_{m_1}, \ldots, \fv_{m_{M'}}]$ 
and $\fv_{m}$ denotes the $m$-th column of the $M \times M$ unitary DFT matrix $\Fm$.  
Given a DFT column $\fv_m$,  we have 
\[ \Bm \fv_m  = \left \{ \begin{array}{ll}
\zerov & \mbox{if} \;\; m \notin \Bc \\
\uv_i & \mbox{if} \;\; m = m_i \in \Bc 
\end{array} \right . \]
where $\uv_i$ denotes a $M' \times 1$ vector with all zero components but a single ``1'' in the $i$-th position.
Using the above property and (\ref{approximate-KL}), the effective DL channel vectors take on the form
\begin{equation} \label{effch}
\heff^{(k)}  = \Bm \sum_{m \in \Sc_k} g_m^{(k)} \sqrt{[\lambdav^{(k)}]_m} \fv_m =  \sum_{i : m_i \in \Bc \cap \Sc_k}  \sqrt{[\lambdav^{(k)}]_{m_i}} g_{m_i}^{(k)} \uv_i.
\end{equation}
In words, the effective channel of user $k$ is a vector with non-identically zero elements only at the positions corresponding to 
the intersection of the beam directions in $\Sc_k$, along which the physical channel of user $k$ carries positive energy, 
and in $\Bc$, selected by the sparsifying precoder. 
The non-identically zero elements are independent Gaussian coefficients $\sim \Cc\Nc(0, [\lambdav^{(k)}]_{m_i})$.  
Notice also that, by construction, the number of non-identically zero coefficients are $|\Bc \cap \Sc_k| \leq \Tdl$ and their 
positions (encoded in the vectors $\uv_i$ in (\ref{effch})), plus an estimate of their variances $[\lambdav^{(k)}]_{m_i}$ are known to the BS. 
Hence, the effective channel vectors can be estimated from the $\Tdl$-dimensional DL pilot observation (\ref{eq:cs_eq_1}) with an estimation MSE
that vanishes as $1/\SNR$. The pilot observation in the form (\ref{eq:cs_eq_1}) is obtained at the 
user $k$ receiver. In this work, we assume that each user sends its pilot observations using $\Tdl$ channel uses in the UL, using analog unquantized feedback, as analyzed 
for example in \cite{caire2010multiuser,kobayashi2011training}. 
At the BS receiver, after estimating the UL channel from the UL pilots, 
the BS can apply linear MMSE estimation and recovers the channel state feedback which takes on the same form of (\ref{eq:cs_eq_1}) with some additional noise
due to the noisy UL transmission.\footnote{As an alternative, one can consider quantized feedback using $\Tdl$ channel uses 
in the UL (see \cite{caire2010multiuser,kobayashi2011training} and references therein). Digital quantized feedback yields generally a better end-to-end estimation MSE in the absence of
feedback errors. However, the effect of decoding errors on the channel state feedback is difficult to characterize in a simple manner since
it depends on the specific joint source-channel coding scheme employed. Hence, in this work we restrict to the  simple analog feedback.} 

With the above precoding, we have $\Bm \Bm^\herm = \Id_{M'}$. Also, 
we can choose the DL pilot matrix $\Psim$ to be proportional to a random unitary matrix of dimension $\Tdl \times M'$, such that 
$\Psim \Psim^\herm = \Pdl  \Id_{\Tdl}$. In this way, the DL pilot phase power constraint (\ref{power-training-phase}) is automatically satisfied. 
The estimation of $\heff^{(k)}$ from the DL pilot observation (\ref{eq:cs_eq_1}) (with suitably increased AWGN variance due to the noisy UL feedback) is completely 
straightforward and shall not be treated here in details. 

For the sake of completeness, we conclude this section with the DL precoded data phase and the corresponding
sum rate performance metric that we shall use in Section \ref{sec:results} for numerical analysis and comparison with other schemes. 
Let $\widehat{\Hm}_{{\text{eff}}}= [ \widehat{{\hv}}_{{\text{eff}}}^{(1)}, \ldots, \widehat{{\hv}}_{{\text{eff}}}^{(K')}]$ be the matrix 
of the estimated effective DL channels for the selected users. We consider the ZF beamforming matrix $\Vm$ given by the column-normalized 
version of the Moore-Penrose pseudoinverse of the estimated channel matrix, i.e., $\Vm = 
\left(\widehat{\Hm}_{{\text{eff}}} \right)^\dagger \Jm^{1/2}$, where $\left(\widehat{\Hm}_{{\text{eff}}} \right)^\dagger = 
\widehat{\Hm}_{{\text{eff}}}  \left ( \widehat{\Hm}_{{\text{eff}}}^\herm \widehat{\Hm}_{{\text{eff}}} \right )^{-1}$ and $\Jm$ is a diagonal matrix that makes the columns of 
$\Vm$ to have unit norm. A channel use of the DL precoded data transmission phase at the $k$-th user receiver takes on the form
\begin{equation} \label{receivedk}
y^{(k)} = \left ( \hv^{(k)} \right )^\herm \Bm^\herm \Vm \Pm^{1/2} \dv + n^{(k)}, 
\end{equation} 
where $\dv \in \CC^{K' \times 1}$ is a vector of unit-energy user data symbols and $\Pm$ is a diagonal matrix defining the 
power allocation to the DL data streams. The transmit power constraint is given by 
\[ \trace( \Bm^\herm \Vm \Pm \Vm^\herm \Bm ) = \trace ( \Vm^\herm \Vm \Pm ) = \trace (\Pm) = \Pdl, \]
where we used $\Bm \Bm^\herm = \Id_{M'}$ and the fact that $\Vm^\herm \Vm$ has unit diagonal elements by construction. 
In particular, in the results of Section \ref{sec:results} we use the simple uniform power allocation $P_k = \Pdl/K'$ to each $k$-th user data stream. In the case of perfect ZF beamforming, i.e., for  $\widehat{\Hm}_{{\text{eff}}} = \Hm_{{\text{eff}}}$, we have that (\ref{receivedk}) reduces to $y^{(k)} = \sqrt{J_k P_k} d_k  +  n^{(k)}$,
where $J_k$ is the $k$-th diagonal element of the norm normalizing matrix $\Jm$, 
$P_k$ is the $k$-th diagonal element of the power allocation matrix $\Pm$, and $d_k$ is the $k$-th user data symbol. 
Since in general $\widehat{\Hm}_{{\text{eff}}} \neq \Hm_{{\text{eff}}}$, due to non-zero estimation error, the received symbol at user $k$ receiver is given by $ y^{(k)} = b_{k,k} d_k + \sum_{k' \neq k} b_{k,k'} d_{k'}   +  n^{(k)},$
where the coefficients $(b_{k,1}, \ldots, b_{k,K'})$ are given by the elements of the $1 \times K'$ row vector
$\left ( \hv^{(k)} \right )^\herm \Bm^\herm \Vm \Pm^{1/2}$ in (\ref{receivedk}). Of course, in the presence of an accurate channel 
estimation we expect that $b_{k,k} \approx \sqrt{J_k P_k}$ and $b_{k,k'} \approx 0$ for $k' \neq k$. For simplicity, in this paper we compare the performance of the proposed scheme
with that of the state-of-the-art CS-based scheme in terms of ergodic sum rate, assuming that all coefficients 
$(b_{k,1}, \ldots, b_{k,K'})$ are known to the corresponding receiver $k$. Including the DL training overhead, this yields the rate expression (see \cite{DBLP:journals/corr/Caire17}) 
\begin{equation}\label{eq:rate_ub}
R_{\rm sum}  = \left (1 - \frac{\Tdl}{T} \right ) \sum_{k \in \Kc} \EE \left [ \log \left ( 1 + \frac{\left|b_{k,k}\right|^2}{1 + \sum_{k' \neq k} \left|b_{k,k'} \right|^2} \right ) \right ].
\end{equation}

\section{Simulation Results}  \label{sec:results}

In this section we compare the performance of the proposed 
approach for FDD massive MIMO 
to two of the most recent CS-based methods proposed in \cite{rao2014distributed} and \cite{ding2016dictionary} 
in terms of channel estimation error and sum-rate. 
In \cite{rao2014distributed}, the authors proposed a method based on common probing of the DL channel with random Gaussian pilots. 
The DL pilot measurements $\yv^{(k)}$ at users $k=1,\ldots,K$ are fed back and
collected by the BS, which recovers the channel vectors using a joint orthogonal matching pursuit (J-OMP) technique able to exploit the possible
common sparsity between the user channels (see channel model in Section \ref{sec:sys_setup}). 
 
	
	In \cite{ding2016dictionary}, a method based on dictionary learning for sparse channel estimation was proposed. 
	In this scheme, the BS jointly \textit{learns} sparsifying dictionaries for the UL and DL channels by collecting channel measurements 
	at different cell locations (e.g., via an off-line learning phase). 
	The actual user channel estimation is posed as a norm-minimization convex program using the trained dictionaries 
	and with the constraint that UL and DL channels share the same support over their corresponding dictionaries. 
	Following the terminology used in \cite{ding2016dictionary}, we refer to this method as JDLCM.

For this comparison, we considered $M = 128$ antennas at the BS, $K=13$ users, and resource blocks of size $T = 128$ symbols. 
For our proposed method, the BS computes the users' sample UL covariance matrices by taking $N_{\rm ul}=1000$ UL pilot observations and then applies 
the scheme explained in Section \ref{sec:DL_cov_est}.  
Given the obtained DL channel covariance matrix estimates,  we first perform the circulant 
approximation and extract the vector of approximate eigenvalues as in  (\ref{eq:circ_approx_2}).
Then, we compute the sparsifying precoder $\Bm$ 
via the MILP solution as given in Section \ref{sec:sparseB}.  
In the results presented here,  we set the parameter $\Pthresh$ in the MILP to a small value 
in order to favor a high rank of the resulting 
effective channel matrix over the beamforming gain.\footnote{This approach is appropriate in the medium to high-SNR regime. For low SNR, it is often convenient to
	increase $\Pthresh$ in order to serve less users with a larger beamforming energy transfer per user.} 
	After probing the effective channel of the selected users along these active beam directions 
via a random unitary pilot matrix $\Psim$, we calculate their MMSE estimate using the estimated DL covariance matrices.
Eventually, for all the three methods, we compute the ZF beamforming matrix based on the obtained channel estimates. 
In addition, instead of considering all selected users, in both cases we apply the Greedy ZF user selection approach of \cite{dimic2005downlink}, 
that yields a significant benefit when the number of users is close to the rank of the effective channel matrix.  
As said before, the DL SNR is given by $\SNR = \Pdl/\Nvar$ and during the simulations we consider ideal noiseless feedback 
for simplicity, i.e., we assume that the BS receives the measurements in \eqref{eq:cs_eq_1} without extra feedback noise 
to the system.\footnote{Notice that by 
	introducing noisy feedback the relative gain w.r.t. J-OMP is even larger, since CS schemes are known to be more noise-sensitive than plain MMSE estimation using estimated DL covariance matrices.} 
	The sparsity order of each channel vector is given as an input to the J-OMP method,
but not to the other two methods. This represents a genie-aided advantage for J-OMP,  that we introduce here for simplicity.   

\begin{figure}[t]
	\centering
	\begin{subfigure}[b]{.4\textwidth}
		\centering
		\includegraphics[width=1.0\linewidth]{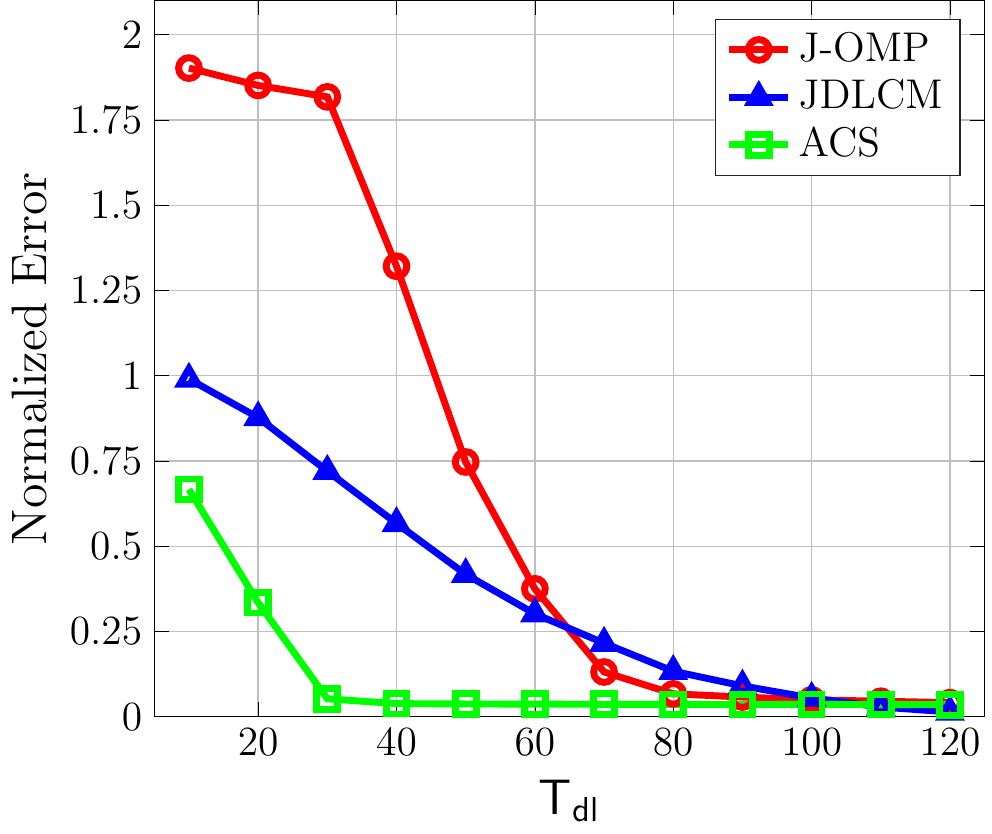}
		\caption{}
		\label{fig:err_plot_blocky}
	\end{subfigure}%
	\begin{subfigure}[b]{.4\textwidth}
		\centering
		\includegraphics[width=1\linewidth]{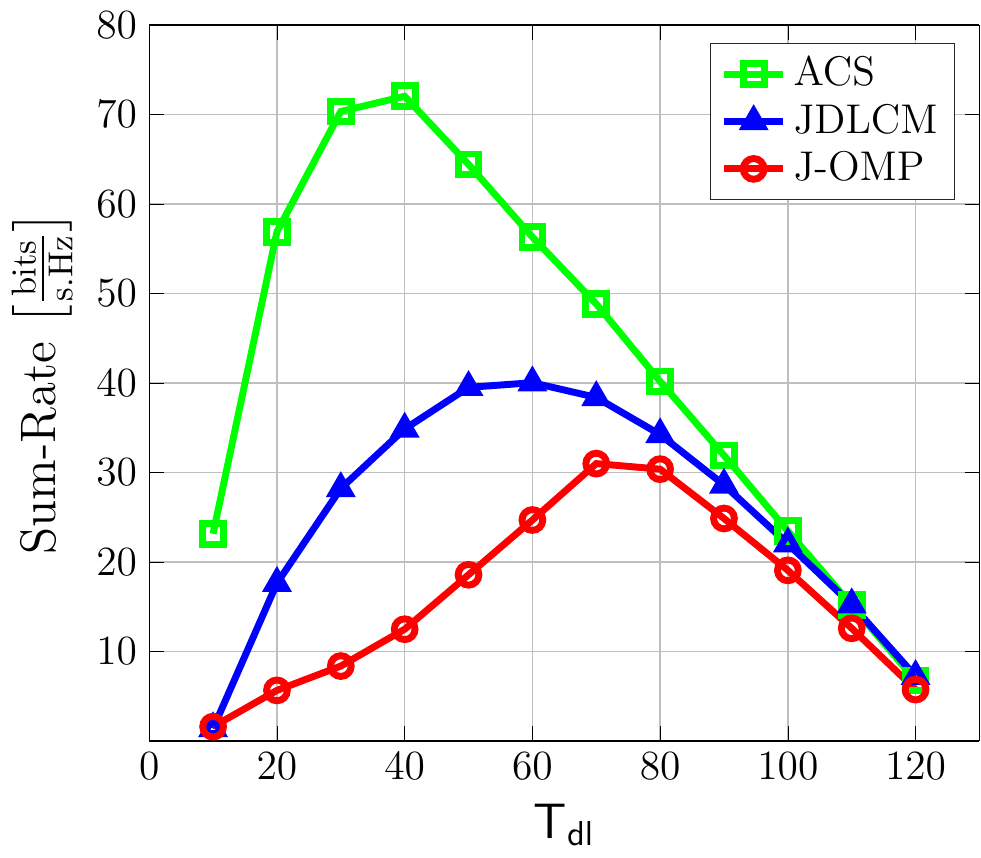}
		\caption{}
		\label{fig:sum_rate_blocky}
	\end{subfigure}
	\caption{\small (a) Normalized channel estimation error, and (b) achievable sum-rate as a function of DL pilot dimension with $\SNR = 20$ dB, $M=128$ and $K=13$.} 
	\label{fig:blocky}
\end{figure}
As the simulation geometry, we consider three MPC clusters with random locations within the angular range (parametrized by $\xi$ rather than $\theta$) $ [-1,1)$. We denote by $\Xi$ the $i$-th interval and set each interval size to be $|\Xi_i| = 0.2,~i=1,2,3$. 
	The ASF for each user is obtained by selecting at random two out of three such clusters, such that the overlap of the 
	angular components among users is large. 
	The ASF is non-zero over the angular intervals corresponding to the chosen MPCs and zero elsewhere, i.e., $\gamma_k (d\xi) = \beta {\bf 1}_{\Xi_{i_1} \cup \Xi_{i_2}},$ where $\beta = 1/\int_{-1}^{1} \gamma_k(d\xi)$ and $i_1,i_2\in \{1,2,3\}$. 
	The described arrangement results in each generated channel vector being roughly $s_k = 0.2\times M\approx 26$-sparse. To measure channel estimation error we use the normalized Euclidean distance as follows. Let $\Hm \in \bC^{M\times K'}$ define the matrix whose columns correspond to the channel vectors of the $K'$ served users and let $\widehat{\Hm}$ denote the estimation of $\Hm$. Then the normalized error is defined as
	\[e = \bE\left[\frac{\Vert \Hm - \widehat{\Hm}\Vert^2}{\Vert \Hm \Vert^2}\right]. \]
	
\subsection{Comparisons} \label{sec:diff_scat}
	
	Fig. \ref{fig:err_plot_blocky} shows the normalized channel estimation error for the J-OMP, JDLCM and our proposed Active Channel Sparsification (ACS) method as a function of the DL pilot dimension $\Tdl$ with $\SNR = 20$ dB. Our ACS method outperforms the other two by a large margin, 
	especially for low DL pilot dimensions. When the pilot dimension is below channel sparsity order, CS-based methods perform very poorly, since the number of channel measurements is less than the inherent channel dimension. 
	Fig. \ref{fig:sum_rate_blocky} compares the achievable sum-rate for the three methods. Again our ACS method shows a much better performance 
	compared to J-OMP and JDLCM. This figure also shows that there is an optimal DL pilot dimension that maximizes the sum-rate. 
	This optimal value is $\Tdl\approx 40$ for our proposed method, $\Tdl \approx 60$ for JDLCM and $\Tdl\approx 70$ for the J-OMP method.
\subsection{The effect of channel sparsity order}
	
\begin{figure}[t]
	\centering
	\includegraphics[width=0.45\linewidth]{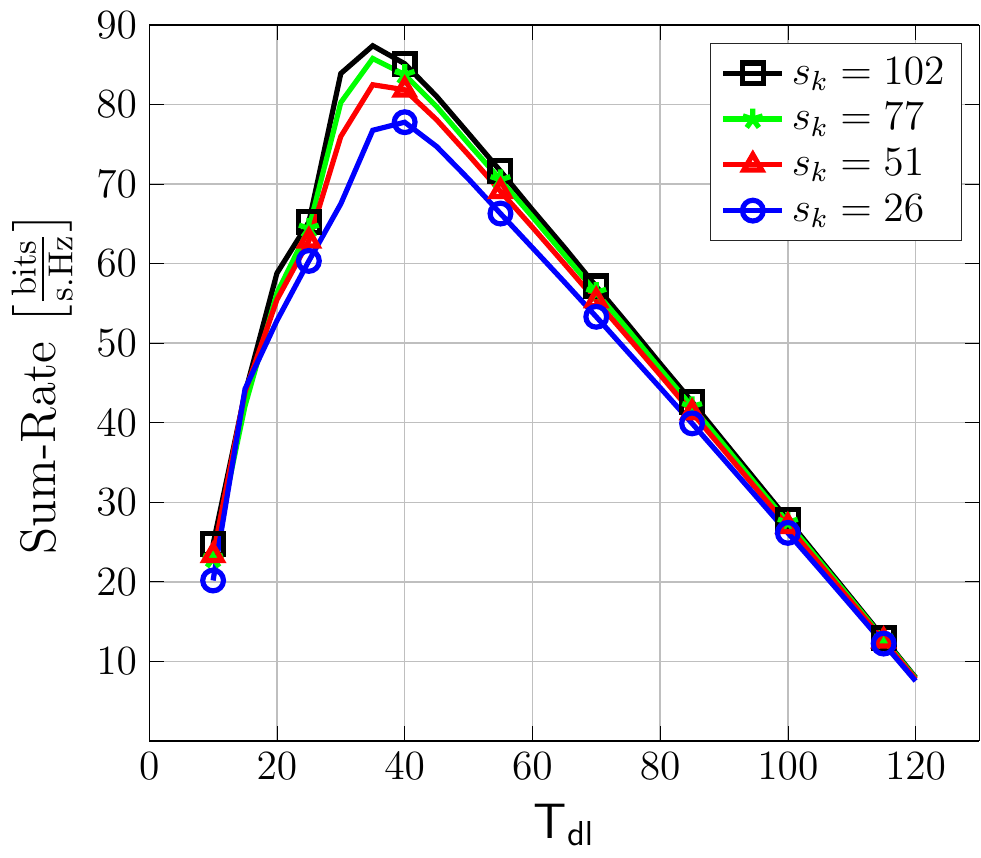}
	\caption{\small Sum-rate vs $\Tdl$ for various channel sparsity orders. Here $\SNR = 20$ dB, $M=128$ and $K=13$.}
	\label{fig:rate_sparsity_compare}
\end{figure}

	Depending on the geometry and user location, channels may show different levels of sparsity in the angular domain. In contrast to CS-based methods, our proposed method is highly flexible with regards to various channel sparsity orders, thanks to the active sparsification method. In this section, we investigate how sparsity order effects channel estimation error as well as sum-rate within the framework of our proposed method. We use the same setup as in section \ref{sec:diff_scat}, i.e. user ASFs consist of two clusters chosen at random among the three. But now we vary the size of the angular interval each of the clusters occupies ($|\Xi_i|=0.2,0.4,0.6,0.8$) and see how it effects the error and sum-rate metrics. The sparsification, channel probing and transmission are performed as described before. Since each ASF consists of two clusters and $M=128$ channel sparsity order (roughly) takes on the values $s_k=26,51,77,102$ for all users $k\in [K']$. For each value of the pilot dimension we perform a Monte Carlo simulation to empirically calculate the sum-rate. Fig. \ref{fig:rate_sparsity_compare} illustrates the results. Notice that in these results we fix the channel coefficient power along each scattering component, such as richer (less sparse) channels convey more signal energy. This corresponds to the physical fact that the more scattered signal energy is collected at the receiving antennas the higher the received signal energy is.  As we can see in Fig. \ref{fig:rate_sparsity_compare}, for a fixed $\Tdl$, when the number of non-zero channel coefficients increases (i.e., the channel is less sparse), we generally have a larger sum-rate. The main reason is that, with less sparse channels, the beamforming gain is larger due to the fact that more scattering components contribute to the channel. Therefore, we can generally say that with our method, for a fixed pilot dimension, 
less sparse channels are {\em better}. Of course, this is not the case for CS-based techniques, or techniques based on 
the ``sparsity assumption'' of a small number of discrete angular components, which tend to collapse and yield very bad results when such sparsity assumptions are not satisfied.

\begin{figure}[t]
	\centering
	\includegraphics[width=0.45\linewidth]{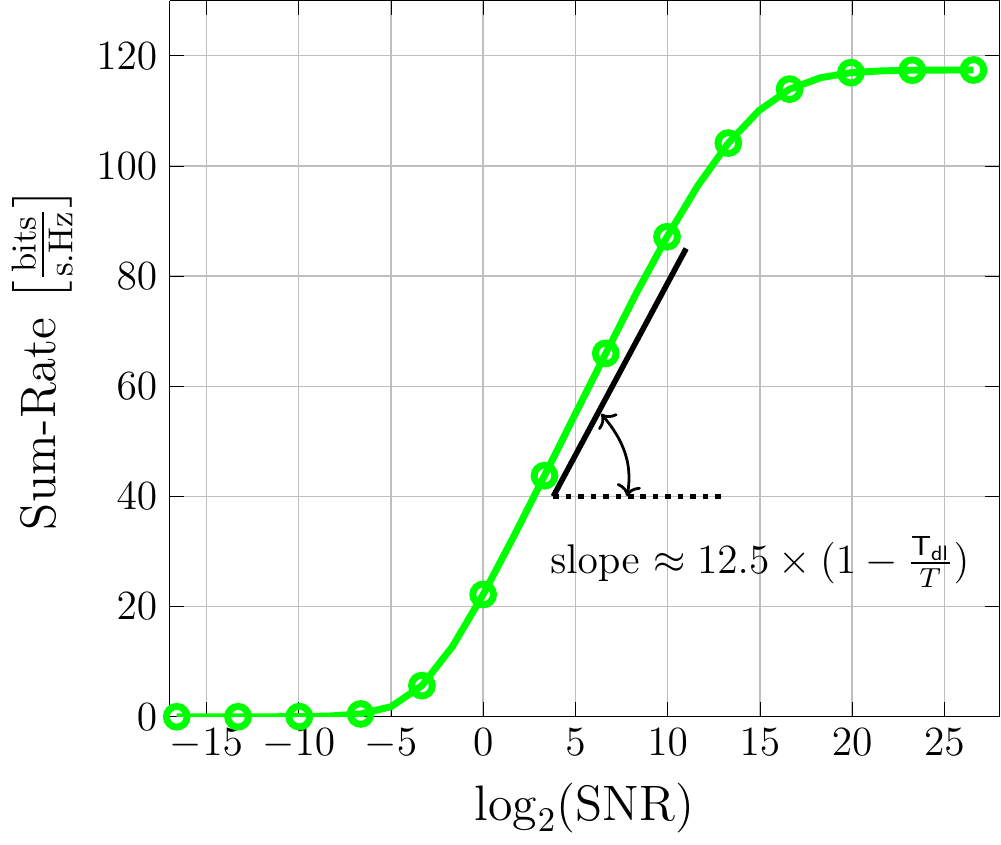}
	\caption{\small Sum-rate as a function of $\log_2\text{(SNR)}$ with $M=128$ and $K=13$.}
	\label{fig:mplex_gain}
\end{figure}

\subsection{Relevance of the pre-log factor}

An interesting final observation is to examine the system sum-rate vs. SNR with 
our proposed method, and in particular show that there is indeed a regime of intermediate SNR for which the slope of the sum-rate curve yields
quite faithfully the number of spatially multiplexed data streams. 
We performed a simulation with $M=128$ antennas and $K=13$ users and a pilot dimension of $\Tdl = 60$. 
The pre-log factor determines the slope of the sum-rate vs $\log_2(\text{SNR})$ curve, in an intermediate regime where the sum-rate is not saturated, and yet the spectral efficiency is large.\footnote{This saturation is due to the non-vanishing covariance estimation error and happens at around $\text{SNR}=60$ dB.} 
As illustrated in Fig. \ref{fig:mplex_gain}, this slope is equal to $12.5 \times (1-\frac{\Tdl}{T})$. Notice that the Greedy ZF scheme decides to serve
a number of users that may be less than $K$ in an opportunistic fashion, such that the expected number of 
served users (DL data streams) in this SNR regime is indeed slightly less than the maximum possible $K = 13$. Hence, the agreement between
the sum-rate slope in this regime and the number of served DL data streams is exactly what can be expected, thus showing the relevance of 
maximizing the rank of the effective matrix in the proposed optimization of the sparsifying precoder.
\section{Conclusion}

We presented a novel approach for FDD massive MIMO systems. 
Our approach exploits the reciprocity of the angular scattering function to estimate the 
covariance matrix of the users' DL channels from the UL pilots 
sent by the users to the BS.  The estimated DL covariance matrices of all users can be approximately expressed in terms of a  
common system of covariance eigenvectors (beam-space representation).  For the ULA setting, such eigenvectors 
are the columns of a DFT matrix, and this representation incurs a vanishing error for large number of BS antennas $M$.
This beam-space information allows the BS to smartly select a set of beams and users such that communication over the resulting 
effective channels is efficient even with a limited DL pilot dimension.  
This beam-user selection procedure is referred to here as 
{\em active channel sparsification} and is achieved via a newly formulated mixed integer linear program (MILP). 
Our simulation results show that the proposed method performs well even in cases where the available DL pilot dimension 
is far less than the inherent dimension of the channel vectors. This represents a fundamental improvement with respect to the state-of-the-art CS-based 
method (in particular, exploiting common sparsity or learned sparsifying dictionaries), 
for which the DL pilot dimension should always be larger than the inherent channel sparsity in the angle domain. 
We conclude by mentioning that in this paper we focused on purpose on a simple single-cell scenario. 
When multiple cells are considered, inter-cell interference should be taken into account. 
However, unlike TDD systems where UL and DL across different cells are synchronous, 
and the limited pilot dimension yields pilot contamination (see \cite{Marzetta-TWC10,larsson2014massive,marzetta2016fundamentals}), in FDD systems
there is no need for tight inter-cell synchronization and the inter-cell incoherent interference simply results 
in a higher level of the background noise,  but can be taken into account in a completely straightforward manner 
(as always traditionally done in the analysis of cellular systems) since no coherently beamformed interference due to pilot contamination 
appears in FDD systems. 

Future work along the lines presented in this paper 
may consist of generalizing the active channel sparsification method to a broader category of array geometries. 
While such generalization is straightforward for UPAs, leveraging the block-Toeplitz covariance structure, for other geometries one must find
efficient methods for UL-DL covariance transformation and efficient ``beam-space representation'' for the design of the sparsifying precoder.
\section{Appendices}
\subsection{Proof of Lemma \ref{lem:stable_rec}}\label{App:stable_rec_proof}

The proof follows by using the representation 
$\hv^{(k)} =  \sum_{m \in \Sc_k} g_{m}^{(k)} \sqrt{[\lambdav^{(k)}]_m} \fv_m$
(see (\ref{approximate-KL})), which holds exactly by assumption. 
	Estimating $\hv^{(k)}$ is equivalent to estimating the vector of KL Gaussian i.i.d. 
	coefficients $\gv^{(k)} = (g_{m}^{(k)} : m \in \Sc_k) \in \CC^{s_k \times 1}$. 
	Define the $M \times s_k$ DFT submatrix $\Fm_{\Sc_k} = (\fv_m : m \in \Sc_k)$, and the corresponding
	diagonal $s_k \times s_k$ matrix of the non-zero eigenvalues $\Lambdam^{(k)}_{\Sc_k}$.
After some simple standard algebra, the MMSE estimation error covariance of $\gv^{(k)}$ from ${\bfy}^{(k)}$ 
in (\ref{eq:cs_eq_1}) with $\Bm = \Id_M$ can be written in the form 
	\begin{eqnarray} 
	\widetilde{\Rm}_e &= &  \Id_{s_k} - 
	\left ( \Lambdam^{(k)}_{\Sc_k} \right )^{1/2}  \Fm_{\Sc_k}^\herm \Psim^\herm 
	\left (    \Psim  \Fm_{\Sc_k} \Lambdam^{(k)}_{\Sc_k}  \Fm_{\Sc_k}^\herm \Psim^\herm + \Nvar \Id_{\Tdl} \right )^{-1} 
	\Psim \Fm_{\Sc_k} \left ( \Lambdam^{(k)}_{\Sc_k} \right )^{1/2}.    \label{error-cov}
	\end{eqnarray}
Using the fact that  $\Rm_e =  \Fm_{\Sc_k} ( \Lambda_{\Sc_k}^{(k)})^{1/2} \widetilde{\Rm}_e ( \Lambda_{\Sc_k}^{(k)})^{1/2} \Fm_{\Sc_k}^\herm$, 
such that $\trace(\Rm_e) = \trace(\Lambdam_{\Sc_k}  \widetilde{\Rm}_e)$, we have that 
$\trace(\Rm_e)$ and $\trace(\widetilde{\Rm}_e)$ have the same vanishing order with respect to $\Nvar$. In particular, 
it is sufficient to consider the behavior of $\trace(\widetilde{\Rm}_e)$ as a function of $\Nvar$. Now, 
using the Sherman-Morrison-Woodbury  matrix inversion lemma \cite{horn1990matrix}, 
after some algebra omitted for the sake of brevity we arrive at
\begin{equation} 
\trace(\widetilde{\Rm}_e) = s_k - \sum_{i=1}^{s_k} \frac{\mu_i}{\Nvar+ \mu_i},   \label{limitN0}
\end{equation}
where $\mu_i$ is the $i$-th eigenvalue of the $s_k \times s_k$ matrix 
$\Am = ( \Lambda_{\Sc_k}^{(k)})^{1/2} \Fm^\herm_{\Sc_k} \Psim^\herm \Psim \Fm_{\Sc_k} ( \Lambda_{\Sc_k}^{(k)})^{1/2}$.  
Next, notice that 
\begin{equation} \label{rank-upper}
{\rm rank}(\Am) = {\rm rank} (\Fm^\herm_{\Sc_k} \Psim^\herm \Psim \Fm_{\Sc_k} ) = 
{\rm rank} (\Fm_{\Sc_k}  \Fm^\herm_{\Sc_k} \Psim^\herm) \leq \min\{ s_k , \Tdl\}. 
\end{equation}
In fact, $\Lambda_{\Sc_k}^{(k)}$ is diagonal with strictly positive diagonal elements, 
such that left and right multiplication by $( \Lambda_{\Sc_k}^{(k)})^{1/2}$ yields rank-preserving 
row and column scalings,  the matrix $\Fm_{\Sc_k}  \Fm^\herm_{\Sc_k}$ is the orthogonal projector onto the 
$s_k$-dimensional column-space of $\Fm_{\Sc_k}$ and has rank $s_k$, while the matrix  
$\Psim^\herm \in \CC^{M \times \Tdl}$ has the same rank of $\Psim^\herm \Psim$, that is at most $\Tdl$. 

For $\Tdl \geq s_k$ the existence of matrices $\Psim$ such that the rank 
upper bound (\ref{rank-upper}) holds with equality (i.e., for which rank$(A) = s_k$ for any support set $\Sc_k$ of size $s_k$) 
is shown as follows. Generate a random $\Psim$ with i.i.d. elements $\sim \Cc\Nc(0,1)$. 
Then,  the columns of $\Fm^\herm_{\Sc_k} \Psim^\herm$ form a collection of $\Tdl \geq s_k$ mutually
independent $s_k$-dimensional Gaussian vectors with i.i.d. $\sim \Cc\Nc(0,1)$ components. The event that 
these vectors span a space of dimension less than $s_k$ is a null event (zero probability). 
Hence, such randomly generated matrix satisfies the rank equality in (\ref{rank-upper}) with probability 1.
As a consequence,  for $\Tdl \geq s_k$ we have that $\mu_i > 0$ for all $i \in [s_k]$ and (\ref{limitN0}) vanishes as $O(\Nvar)$ as $\Nvar \downarrow 0$.
In contrast, if $\Tdl < s_k$, by (\ref{rank-upper}) for any matrix $\Psim$ at most $\Tdl$ eigenvalues $\mu_i$ in (\ref{limitN0}) are non-zero 
and $\lim_{\Nvar\downarrow 0} s_k - \sum_{i=1}^{s_k} \frac{\mu_i}{\Nvar + \mu_i} \geq s_k - \Tdl > 0 $. \hfill \QED

\subsection{Proof of Lemma \ref{lem:matching}}\label{app:proof_lemma_matching}

		The determinant of $\Qm$ is given by the expansion $\text{det}(\Qm)=\sum_{\iota\in \boldsymbol{\pi}_r} \text{sgn}(\iota) \prod_{i} [\Qm]_{i,\iota (i)}$, where $\iota$ is a permutation of the set $\{1,2,\ldots,r\}$, where $\boldsymbol{\pi}_r$ is the set of all such permutations and where $\text{sgn}(\iota)$ is either 1 or -1. The product $\prod_{i} [\Qm]_{i,\iota (i)}$ is non-zero only for the perfect matchings in the bipartite graph. Hence, if the bipartite graph contains a perfect matching, then $\text{det}(\Qm)\neq 0$ with probability 1 (and $\text{rank}(\Qm)=r$), since the non-identically zero entries of $\Wm$ are drawn from a continuous distribution. If it does not contain a perfect matching, then $\text{det}(\Qm)=0$ and therefore $\text{rank}(\Qm)<r$. \hfill \QED

\subsection{Proof of Theorem \ref{thm:MILP_formulation}\label{app:proof_milp_formulation}}
First, without loss of generality let assume that 
$\Lc$ contains no isolated nodes (since these would be discarded anyway).  
As before the $|\Ac|\times |\Kc|$ weighted adjacency matrix is denoted by $\Wm$ where $[\Wm]_{m,k} = w_{m,k}$. 
An example of the bipartite graph $\Lc$ and its corresponding weighted adjacency matrix $\Wm$ is illustrated in Figs. \ref{fig:bi_graph} and \ref{fig:adj_mat}. Given the bipartite graph $\Lc(\Ac,\Kc,\Ec)$, we select the subgraph $\Lc'(\Ac',\Kc',\Ec')$, so that the constraint \eqref{eq:my_opt_1-two} is satisfied. 
We introduce the binary variables $\{x_m, m \in \Ac\}$ and $\{y_k, k \in \Kc\}$ to indicate if beam $m$ and user $k$ are selected, respectively.
As such, the constraint \eqref{eq:my_opt_1-two} is equivalent to the set of constraints:
\begin{subequations}
\begin{minipage}{.5\textwidth}
	\begin{align} 
x_m &\le \sum_{k\in \Kc} [\Am]_{m,k} y_k ~~ \forall m\in \Ac \label{eq-subgraph-selection-one}
	\end{align}
\end{minipage}%
\begin{minipage}{.5\textwidth}
	\begin{align} 
	y_k &\le \sum_{m \in \Ac} [\Am]_{m,k} x_m ~~\forall k \in \Kc  \label{eq-subgraph-selection-two} 
	\end{align}
\end{minipage}%
	\begin{align}
\sum_{m\in \Ac} [\Am]_{m,k} x_m &\le \Tdl y_k + M (1-y_k) ~~\forall k\in \Kc \label{eq-subgraph-selection-three}
\end{align}
\end{subequations}
In particular, \eqref{eq-subgraph-selection-one} ensures that if the beam $m$ is selected (i.e., $x_m=1$), 
there must be some $k \in \Kc$ such that $(m,k) \in \Ec$ is selected as well, whereas if beam $m$ is not selected, then this constraint is redundant. 
Similarly, in \eqref{eq-subgraph-selection-two} if user $k$ is selected (i.e., $y_k =1$), there must be some $m \in \Ac$ such that $(m,k) \in \Ec$ is selected as well. Furthermore, \eqref{eq-subgraph-selection-three} guarantees that if user $k$ is chosen (i.e., $y_k =1$), the number of chosen beams with $x_m=1$ is no more than $\Tdl$, and otherwise this constraint is redundant. Meanwhile, the constraint \eqref{eq:my_opt_1-three} is written as:
\begin{align} \label{eq:threshold}
\Pthresh \, y_k \le \sum_{m \in \Ac} [\Wm]_{m,k} x_m ~~\forall k \in \Kc
\end{align}
which ensures that if user $k$ is chosen (i.e., $y_k=1$) then the sum weights of the selected beams (i.e., $m \in \Nc_{\Lc'}(k)$ if $x_m=1$) is no less than $\Pthresh$, while if user $k$ is not chosen (i.e., $y_k=0$) then this constraint is not required and redundant. A closer look reveals that the constraint \eqref{eq:threshold} renders the one \eqref{eq-subgraph-selection-two} redundant, because when $y_k=1$ in \eqref{eq:threshold} there must exist at least one $m \in \Ac$ with $x_m=1$. Second, given the selected subgraph $\Lc'(\Ac',\Kc',\Ec')$, we find a matching  $\Mc (\Ac',\Kc')$ with maximum cardinality. 
To this end, we introduce another set of binary variables $\{z_{mk}, m \in \Ac, k \in \Kc\}$ to indicate if an edge $(a,k) \in \Ec$ is chosen to 
form the maximum matching  in $\Lc'(\Ac',\Kc',\Ec')$.
Following the canonical linear program formulation of the maximum cardinality matching for bipartite graphs, we translate the objective 
in \eqref{eq:my_opt_1} into the following optimization:
\begin{subequations} \label{eq:subgraph-matching}
	\begin{align}
	\underset{z_{m,k}\in \{0,1\} }{\text{maximize}}  & ~~\sum_{m \in \Ac'} \sum_{k \in \Kc'} [\Am]_{m,k} z_{m,k} \label{eq:sub_matching-one} \\
	\text{subject to}  &  ~~~~ \sum_{k\in \Kc'} [\Am]_{m,k} z_{m,k} \le 1 ~~\forall m\in \Ac', \label{eq:sub_matching-two}  \\
	~~ & ~~~~ \sum_{m\in \Ac'} [\Am]_{m,k} z_{m,k} \le 1 ~~ \forall k\in \Kc', \label{eq:sub_matching-three}
	\end{align}
\end{subequations}
Now, to transport the optimization problem on $\Lc'$ to the original setting on $\Lc$, we need to guarantee that $\Mc(\Ac',\Kc') \subseteq \Ec'$, i.e., $z_{mk}=1$ only if $m \in \Ac'$ ($x_m=1$), and $k \in \Kc'$ ($y_k=1$). This is obtained for a given configuration of the variables
$\{x_m\}$ and $\{y_k\}$ which define $\Lc'$, by adding constraints to \eqref{eq:subgraph-matching} and yields
\begin{subequations} \label{eq:matching}
	\begin{align}
	\underset{z_{m,k}\in \{0,1\} }{\text{maximize}}  & ~~\sum_{m \in \Ac} \sum_{k \in \Kc} [\Am]_{m,k} z_{m,k} \label{eq:matching-one} \\
	\text{subject to}  &  ~~~~ \sum_{k\in \Kc} [\Am]_{m,k} z_{m,k} \le 1 ~~\forall m\in \Ac, \label{eq:matching-two}  \\
	~~ & ~~~~ \sum_{m\in \Ac} [\Am]_{m,k} z_{m,k} \le 1 ~~ \forall k\in \Kc, \label{eq:matching-three}  \\
	~~ & ~~~~[\Am]_{m,k} z_{m,k} \le x_m ~~ \forall k \in \Kc, m \in \Ac,  \label{eq:matching-four} \\
	~~ & ~~~~[\Am]_{m,k} z_{m,k} \le y_k ~~ \forall k \in \Kc, m \in \Ac, \label{eq:matching-five}
	\end{align}
\end{subequations}
where \eqref{eq:matching-four}-\eqref{eq:matching-five} impose that the edge set $\{(m,k): z_{m,k}=1\}$ should be a subset of $\Ec'$. A further inspection on these constraints yields the following equivalent simplified form:
\begin{subequations} \label{eq:matching-simply}
	\begin{align} 
	\underset{z_{m,k}\in \{0,1\} }{\text{maximize}}  & ~~\sum_{m \in \Ac} \sum_{k \in \Kc} z_{m,k} \\
	\text{subject to}  &  ~~~~ z_{m,k} \le [\Am]_{m,k}, ~~ \forall m \in \Ac, k \in \Kc,   \label{eq:z-of-interest} \\
	~~&  ~~~~\sum_{k\in \Kc} z_{m,k} \le x_m, ~~\forall m\in \Ac, \label{eq:matching-two-four} \\ 
	~~&  ~~~~\sum_{m\in \Ac} z_{m,k} \le y_k, ~~ \forall k\in \Kc,  \label{eq:matching-three-five}
	\end{align} 
\end{subequations}
where the additional constraint \eqref{eq:z-of-interest} turns all the terms of the type $[\Am]_{m,k} z_{m,k}$ in \eqref{eq:matching} to $z_{m,k}$ in \eqref{eq:matching-simply}, the constraint \eqref{eq:matching-two-four} results from the combination of the constraints \eqref{eq:matching-two} and \eqref{eq:matching-four}, and \eqref{eq:matching-three-five} results from the combination of \eqref{eq:matching-three} with \eqref{eq:matching-five}. The formulation in \eqref{eq:matching-simply} can be seen as a modified maximum cardinality bipartite matching with selective vertices, 
in which the vertices with $x_m=1$ and $y_k=1$ are selected to participate in the maximum cardinality matching. The eventual mixed integer linear program is given as in \eqref{opt:P_MILP_Thm}. 
Notice that we have relaxed the binary constraint on $\{z_{m,k},\, m\in \Ac, k\in \Kc\}$ to the linear constraint \eqref{eq:last_thm} based on the following lemma. 
\begin{lemma}\label{thm:theorem_1}
	The problem $\Pc_{\rm MILP}$ as stated in \eqref{opt:P_MILP_Thm} always has binary-valued solutions for $\{z_{m,k},\, m\in \Ac, k\in \Kc\}$.  $\;\;\;\;$ \hfill $\square$
\end{lemma}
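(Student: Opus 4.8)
The plan is to exploit the fact that the only genuinely combinatorial variables in $\Pc_{\rm MILP}$ are the $\{x_m\}$ and $\{y_k\}$, while the edge variables $\{z_{m,k}\}$ enter only through a linear program with very special structure. First I would fix the binary variables $x_m,y_k$ at the values they take in an optimal solution of \eqref{opt:P_MILP_Thm}. Once $\{x_m\}$ and $\{y_k\}$ are frozen to integers, the residual optimization over $\{z_{m,k}\}$ is a pure linear program: maximize $\sum_{m,k} z_{m,k}$ subject to $z_{m,k}\le [\Am]_{m,k}$, $\sum_{k\in\Kc} z_{m,k}\le x_m$, $\sum_{m\in\Ac} z_{m,k}\le y_k$, and $z_{m,k}\in[0,1]$. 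The goal is then to show this residual LP always admits an optimal solution that is integral, hence binary.

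The key structural observation is that the degree constraints $\sum_{k} z_{m,k}\le x_m$ and $\sum_{m} z_{m,k}\le y_k$ form exactly the vertex--edge incidence matrix of a bipartite graph: each edge variable $z_{m,k}$ appears in precisely one left-node constraint (indexed by $m\in\Ac$) and precisely one right-node constraint (indexed by $k\in\Kc$), each with coefficient $1$. It is a classical fact of combinatorial optimization that the $0/1$ incidence matrix of a bipartite graph is totally unimodular. The remaining relations are the support/box constraints $0\le z_{m,k}\le [\Am]_{m,k}$, which are rows of an identity block; appending identity rows to a totally unimodular matrix preserves total unimodularity. Hence the whole constraint matrix of the residual LP is totally unimodular.

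With this in hand I would invoke the standard integrality theorem: if the constraint matrix is totally unimodular and the right-hand side is integral, then every vertex of the feasible polytope is integral. Here the right-hand sides are exactly the frozen integers $x_m,y_k\in\{0,1\}$ together with the entries $[\Am]_{m,k}\in\{0,1\}$, so all vertices are integral, and combined with $z_{m,k}\in[0,1]$ they are binary. Since a linear program attains its maximum at a vertex, there is a binary optimizer $\tilde z$ of the residual LP. Splicing $\tilde z$ together with the fixed $(x_m,y_k)$ produces a feasible point of $\Pc_{\rm MILP}$ whose objective equals that of the original optimum---the regularization term $\epsilon\sum_m x_m$ depends only on $\{x_m\}$ and is therefore unchanged---so this spliced point is itself optimal and has binary $z$-variables.

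I expect the main subtlety to be justifying that relaxing only the $z$-variables, while keeping $x,y$ integral, loses nothing: this is precisely where total unimodularity is needed. Concretely, I must verify that the support constraints $z_{m,k}\le[\Am]_{m,k}$ genuinely reduce to fixing the out-of-support variables to zero and bounding the remaining ones by one, so that the constraint matrix really is of the bipartite-incidence-plus-identity form and not something more general. Once that bookkeeping is done, the conclusion is the textbook integrality result for bipartite degree-constrained subgraph (b-matching) polytopes, and the relaxation of $\{z_{m,k}\}$ to $[0,1]$ in \eqref{eq:last_thm} is fully justified.
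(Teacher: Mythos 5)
Your proof is correct and follows essentially the same route as the paper: fix the binary $x_m,y_k$ at an optimum and observe that the residual LP over the $z_{m,k}$ is a (degree-constrained) bipartite matching LP whose polytope is integral. The only cosmetic difference is that you derive integrality from total unimodularity of the bipartite incidence matrix with identity rows appended, whereas the paper cites the integrality of the bipartite matching polytope directly (Schrijver), after first disposing of the case $x_m=0$ or $y_k=0$; your uniform treatment via integral right-hand sides subsumes that case split.
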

\begin{proof}
	It suffices to show that $z_{m,k}$ are binary, given that $x_m$ and $y_k$ are binary. First, if either $x_m,~ m\in\Ac$ or $y_k,~k\in\Kc$ are $0$, then $z_{a,k}=0$. So, we only need to focus on the case where $x_m=y_k=1,~m\in\Ac,k\in\Kc$. In that case, the constraints of $\Pc_{\rm MILP}$ with respect to $z_{m,k},~m\in\Ac,k\in\Kc$ form a convex polytope. This polytope is called the bipartite matching polytope, which is integral, i.e. all of its extreme points have integer (and in this case binary) values (see \cite[Corollary 18.1b. and Theorem 18.2.]{schrijver2003combinatorial}). Therefore, given $x_m,y_k\in \{0,1\},~\forall m\in\Ac,k\in\Kc$, $\Pc_{\rm MILP}$ reduces to a linear program with respect to the variables $z_{m,k}$ and the optimal solutions are the integral extreme points of the corresponding polyhedra and the proof is complete. 
\end{proof}

%

\balance

\bibliographystyle{IEEEtran}
{\footnotesize
\bibliography{references}
}

\end{document}